\documentclass[11pt,a4paper,final]{article}
\usepackage[utf8]{inputenc}
\usepackage{amsmath}
\usepackage{amsfonts}
\usepackage{graphicx}
\usepackage{xcolor}
\usepackage{natbib}
\usepackage{amssymb}
\usepackage[left=1in,right=1in,top=1in,bottom=1in]{geometry}

\newcommand{\qed}{\hfill$\rule{2mm}{3mm}$}
\newenvironment{proof}{\par{\noindent \bf Proof }}{\qed \par}

\DeclareMathOperator{\opt}{optimize}

\newcommand{\xhdr}[1]{\paragraph{\bf {#1}}}

\usepackage[suppress]{color-edits}
\addauthor[Karen]{kl}{purple}
\addauthor[Hoda]{hh}{red}
\addauthor[Solon]{sb}{blue}
\addauthor[Jon]{jk}{magenta}

\newcommand{\hh}[1]{\textcolor{cyan}{[HH: #1]}}

\usepackage{amsmath}
\usepackage{amsfonts}
\usepackage{graphicx}
\usepackage{subcaption}
\usepackage{multirow}
\usepackage{wrapfig}
\usepackage{url}
\usepackage{color}

\newtheorem{theorem}{Theorem}
\newtheorem{proposition}{Proposition}

\newtheorem{lemma}{Lemma}
\newtheorem{corollary}{Corollary}

\providecommand{\floor}[1]{ \lfloor #1 \rfloor }

\newcommand{\reals}{\mathbb{R}}

\newcommand{\vp}{\mathbf{p}}

\newcommand{\vt}{\mathbf{t}}

\newcommand{\vx}{\mathbf{x}}

\newcommand{\ones}{\textbf{1}}
\newcommand{\zeros}{\textbf{0}}


\newcommand{\hl}[1]{\textcolor{black}{#1}}

\usepackage[utf8]{inputenc}

\DeclareFontFamily{U}{mathb}{\hyphenchar\font45}
\DeclareFontShape{U}{mathb}{m}{n}{
<-6> mathb5 <6-7> mathb6 <7-8> mathb7
<8-9> mathb8 <9-10> mathb9
<10-12> mathb10 <12-> mathb12
}{}
\DeclareSymbolFont{mathb}{U}{mathb}{m}{n}
\DeclareMathSymbol{\llcurly}{\mathrel}{mathb}{"CE}
\DeclareMathSymbol{\ggcurly}{\mathrel}{mathb}{"CF}


\title{On Modeling Human Perceptions of \\Allocation Policies with Uncertain Outcomes}

\date{}

\author{
   \makebox[.45\textwidth]{Hoda Heidari}\\
   Carnegie Mellon University\\
   \url{hheidari@cmu.edu} \\
   \and
   \makebox[.45\textwidth]{Solon Barocas}\\
   Microsoft Research,  Cornell University\\
   \url{solon@microsoft.com} \\
   \and
    \makebox[.45\textwidth]{Jon Kleinberg}\\
   Cornell University\\
   \url{kleinberg@cornell.edu} \\
   \and
   \makebox[.45\textwidth]{Karen Levy}\\
   Cornell University\\
   \url{karen.levy@cornell.edu} \\
}

\begin{document}
\maketitle

\begin{abstract}
\noindent Many policies allocate harms or benefits that are {\em uncertain} in nature: they produce distributions over the population in which individuals have different probabilities of incurring harm or benefit. Comparing different policies thus involves a comparison of their corresponding probability distributions, and we observe that in many instances the policies selected in practice are hard to explain by preferences based only on the expected value of the total harm or benefit they produce. In cases where the expected value analysis is not a sufficient explanatory framework, what would be a reasonable model for societal preferences over these distributions? Here we investigate explanations based on the framework of {\em probability weighting} from the behavioral sciences, which over several decades has identified systematic biases in how people perceive probabilities. We show that probability weighting can be used to make predictions about preferences over probabilistic distributions of harm and benefit that function quite differently from expected-value analysis, and in a number of cases provide potential explanations for policy preferences that appear hard to motivate by other means. In particular, we identify optimal policies for minimizing perceived total harm and maximizing perceived total benefit that take the distorting effects of probability weighting into account, and we discuss a number of real-world policies that resemble such allocational strategies. Our analysis does not provide specific recommendations for policy choices, but is instead fundamentally interpretive in nature, seeking to describe observed phenomena in policy choices. 
\end{abstract}

\section{Introduction}


Societies frequently wrestle with tough decisions regarding the allocation of benefits or burdens among their populations (see, e.g.,~\citep{calabresi1978tragic,viscusi2018pricing}). These decisions---particularly those that involve harm---are immensely difficult yet often unavoidable. As \citet{sunstein2003hazardous} points out, governments regularly pursue policies that lead to harms, including death, among the public: 
\begin{quote}
    {\em If government allows new highways to be built, it will know that people will die on those highways; if government allows new power plants to be built, it will know that some people will die from the resulting pollution. [...] Of course it would make sense, in most or all of these domains, to take extra steps to reduce risks. But that proposition does not support the implausible claim that we should disapprove, from the moral point of view, of any action taken when deaths are foreseeable.}
\end{quote}
These considerations remain true even when the prospective harms are reduced as much as possible; to the extent that they haven't been eliminated altogether, we must reason about the impact of policies that produce foreseeable harms.

To make matters more complicated, many of these allocations deal in \emph{probabilities} of some outcome occurring: when we raise the speed limit by a certain amount, for example, we can estimate to some approximate level the number of additional traffic fatalities that will result \citep{farmer2019effects}, but we can say much less about who in particular will die.
Thus, for matters involving harm, the policy process necessarily involves a set of choices (even if these choices arise only implicitly) between different {\em distributions} of harm over the population.\hhcomment{Above, we are effectively equating statistical predictions about a policy with probabilities of harm experienced by individuals. Should we clarify that in the absence of further information pinpointing who exactly is at high risk, we assume people perceive statistical facts as probabilities?}\klcomment{I don't think it's essential to clarify, but feel free if you would prefer!}\sbcomment{I also don't think we need to clarify, as doing so would interrupt the flow of the text. I think the plan was to address this point later in the text anyway.} For example, policy $P$ might produce a probability $p_i$ that individual $i$ is harmed,\hhcomment{Later we are using $\pi$ to refer to allocation policies, but it's not a big deal I think.}\klcomment{I agree but if you want to make consistent, feel free} while policy $Q$
might produce a probability $q_i$ that individual $i$ is harmed, for each individual in the population.  (To keep the discussion simple, we will think about a single kind of ``harm'' that can befall people as a result of the policy, rather than adding the complexity of different degrees of harm.)

\xhdr{Comparing probability distributions resulting from different policies} 
How should we compare the two distributions of harm that 
arise from policies $P$ and $Q$, respectively?
Much of the work that underpins mathematical models in these domains, including
many of 
the loss functions that go into algorithmic decisions, tend to be based on
expected cost---the idea that we should favor the policy that
produces the lower expected harm.
In our case, policy $P$ produces a sequence of probabilities
$(p_1, p_2, ..., p_n)$ over the $n$ members of the population, 
and its expected harm is the sum $p_1 + p_2 + \cdots + p_n$;
we can write a similar expression for the probabilities of harm
$(q_1, q_2, ..., q_n)$ produced by policy $Q$.

Of course, real-life policymaking is complex, and it is not clear that minimization of expected harm is typically the chief criterion in selecting among policy options. But there is also a more basic problem with 
using expected harm as the criterion: 
many policy questions about competing distributions of harm
begin after we've already reduced the total amount of harm to a roughly fixed, low target level,\hhcomment{should we mention here that the target level is often close to the minimum expected harm achievable given the constraints on hand?}\sbcomment{I think this point might be worth making because it helps to emphasize that we're not interested in cases where policy-makers have set the total harm at some arbitrarily level; we're interested in hard cases where the harm has---presumably---been reduced as much as we believe it can be.}
and so the debate is between distributions that all have the same
expected level of harm. How, then, should we think about preferences between 
these competing policy proposals?



\xhdr{An example: The Selective Service System}
We can see the outlines of such debates in a number of settings
where a risk of harm is being allocated across a population.
In the policies for drafting people into the military in the United States, for example, the government has considered a number of different implementations for randomizing the selection of inductees.  (Here, required service in the military is the cost, or harm, that is being allocated according to a probability distribution.)
Under a given policy $P$, individual $i$ would learn that they had a
probability $p_i$ of being drafted. 
Crucially, difficult questions about the implementations of draft systems persist regardless of the desired {\em size} of the military; that is, for a given size of the military,
the sum of the draft probabilities $p_i$ over the population is
pinned to this number, but some distributions
of these probabilities have been nonetheless viewed as preferable to others.

What accounts for these preferences? We note that discussions of revisions to the draft framed uncertainty itself as a cost being borne
by members of the population.
As the U.S. Selective Service System notes, prior to the introduction
of a structured process for randomization, men\footnote{Under U.S. law, only men are, or have ever been, required to register for the draft.} knew only that they
were eligible to be drafted from the time they turned 18 until they reached age 26; 
{\em ``[this] lack of a system resulted in uncertainty for the potential
draftees during the entire time they were within the draft-eligible
age group. All throughout a young man’s early 20’s he did not know if
he would be drafted.''} \citep{sss} The systems that were subsequently introduced specified priority groups according to age, which had the effect of deliberately producing non-uniform probabilities of being drafted; under these systems, some people learned that their probability of being drafted
was higher than average, and others learned that their probability was lower than average.\footnote{Specifically, men were drafted according to ``priority year,'' with the youngest men being drafted first. During the year a man was 20 years old, he was in the top priority group, with reduced likelihood of being called up each subsequent year. Within each group, call-up order was randomized by lottery according to birthday \citep{sss}.} Viewed in terms of distributions, these policy changes had the effect of
{\em concentrating} the 
probabilities more heavily on a subset of the eligible population, rather than
{\em diffusing} the probabilities more evenly across everyone.
The quote from the Selective Service System points out that a
process that diffuses probabilities too widely seems to create unnecessary (and harmful) levels of uncertainty; 
but there are, of course, corresponding objections that could be raised
to processes that concentrate probabilities too heavily on too small a group.

An abstraction of these questions would therefore consider multiple 
probability distributions of harm---for example, policy $P$ producing
$(p_1, p_2, ..., p_n)$, policy $Q$ producing $(q_1, q_2, ..., q_n)$, 
and perhaps others---and ask which of these should be preferred as a choice for society.
In posing such questions, we are guided by the belief that studying reactions to distributions of harm should draw closely on those parts of the behavioral sciences that have considered how people subjectively evaluate probabilities. We therefore develop a framework based on 
the concept of {\em probability weighting} from behavioral economics.

Our model will allow us to evaluate the Selective Service System's argument, and similar arguments in other domains, at a broad level---the contention that completely uniform randomization over the draft-eligible population is a sub-optimal policy because the cumulative level of uncertainty felt by the population is unnecessarily high. 
At first glance, this argument is counter-intuitive: since the size of the military is the same under all the draft policies being considered, isn't the cumulative level of uncertainty felt by the population also the same under all policies?
On closer inspection, though, we find that this decision---to shift the probabilities in a non-uniform direction, and to interpret this as reducing cumulative uncertainty---is very much consistent with the predictions of probability weighting.

\xhdr{A stylized example}
To further motivate the models that follow, we can adapt our discussions
about harm allocations---and complex scenarios such as the military draft---into a stylized example in which a fixed amount of harm must be allocated
across a given population.
We will argue that different allocations of harm have very different
subjective resonances, and it is these differences that behavioral theories
of probability weighting aim to illuminate.


Thus, as a thought experiment, consider the following hypothetical example. Suppose we need to allocate 1 unit of harm among 100 individuals. For simplicity, let's assume all 100 individuals are equally deserving and willing to bear the harm. We might allocate the harm to one specific person (say, Bob), while giving the other 99 people certainty that they are not at risk---hence the probability distribution $(1,0,\cdots, 0)$. Feeling sorry for Bob, we might instead divide the risk between him and another member of the population, Chloe---and ultimately flip a coin to decide which of them is to bear the harm\hhcomment{This might be a good place to emphasize that we are evaluating these policies ex-ante--before the risks turn into realized/deterministic outcomes.}\klcomment{I think it may be clear enough, but feel free to add if you like! If so might want to add at beginning or end of the para to avoid disrupting the flow.}\sbcomment{I agree that it is clear as is, so don't think there's any need for additions}, while the other 98 people are free and clear; i.e. the distribution $(1/2, 1/2, 0, \cdots, 0)$. Or we could have a third person, David, join Bob and Chloe in the risk pool, lowering the risk for each of them to one-third $(1/3, 1/3, 1/3, 0, \cdots, 0)$. Finally, we might allocate the risk evenly among all 100 individuals, and select the recipient of the harm by random lottery: $(0.01, \cdots, 0.01)$.

How might a policymaker select among these policies? Each of them, ultimately, results in the same amount of harm (1 unit) befalling the population, yet they strike us as intuitively quite different. We may consider it blatantly unfair to single Bob out as a certain victim by concentrating the risk completely on him; and indeed, a long line of work in psychology on the so-called {\em identifiable victim effect} suggests that we tend to find such outcomes particularly troubling \citep{jenni1997explaining}.\footnote{Philosophy has also grappled with the observation that we tend to recoil at the idea of, for example, harvesting one person's organs to save the lives of five other people. Such cases reveal an intuitive distaste for distributions that aim to reduce the overall amount of harm experienced by a population by focusing those harms on a small subset of people \citep{thomson1976killing}.
Note that our framework does not apply to these cases because concentrating costs in these instances actually reduces the total cost (e.g., reducing the total number of deaths from five to one); in our settings, the way a policy allocates harms does not affect the amount of harm imposed on the overall population.} 

On the other hand, a random lottery distributes the risk equally among all 100 individuals---but in the interim, it forces \textit{everybody} to worry about their chances of being harmed.  (This is the form of uncertainty, and corresponding psychological cost, that the Selective Service System was concerned with in our example of the draft lottery.) The second and third options provide intermediate alternatives. In the second alternative, no one person is harmed with \emph{certainty}, while at the same time, the smallest possible number of individuals need bear the risk.

The fact that we may prefer some of the above alternatives to others immediately suggests that a cost-benefit analysis based on expected harm is not sufficient to capture our intuitions---since all the options involve the same expected amount of harm.  Likewise, our intuitive reactions to these different proposals do not neatly map onto common concerns with distributive justice, where we tend to worry about the relative impact of allocations on different \hl{social groups or subgroups within} the population, given existing social inequalities. In this case, our reactions have nothing to do with any details about who Bob, Chloe, and Dave happen to be or the social groups to which they belong. What we perceive to be the more desirable allocation instead seems to rest on how we perceive the benefits or harms of being subject to uncertain outcomes.\footnote{\hl{To put it differently, the purpose of our work is \emph{not} to argue that probability-weighting tends to result in distributions that disproportionately harm members of specific social groups. Rather, we study human perceptions toward distributions that allocate the same type of harm unevenly across \emph{otherwise-equal} individuals (without specifying their group memberships). As we show later in the paper, behavioral principles suggest why people have strong reactions to uneven distributions even when the specific people who are harmed by or benefit from the allocation do not belong to distinct social groups.
}}

\xhdr{An interpretive analysis}

Our intention in exploring people's subjective perceptions of risk probabilities is, emphatically, \textit{not} to prescribe a ``best'' mode of allocating probabilities of risk, nor to endorse the underlying policy decisions that give rise to a need to allocate such risk in the first place, nor to treat superficially the variety of other procedural and moral concerns that attend the allocation of harms and benefits to people. Ours is a purely {\em interpretive} undertaking; we find that preferences for certain allocation policies involving probabilities are difficult to explain unless we take probability weighting into account. 

Policy experts disagree about the extent to which cognitive errors ought to be explicitly incorporated into account in public decision-making. While some consider it foolish to base policies on what are essentially misunderstandings, others suggest that we might reasonably consider the ``psychic benefits'' to the public of protecting against ``imaginary'' risks \citep{schneier2008psychology,viscusi2018pricing,portney1992trouble,pollak1998imagined}. We stake no claim in this debate; our goal is to explore descriptively how people's subjective perceptions of probabilities \textit{might} impact preferences regarding such allocations---and how these impacts potentially explain peculiar real-life allocation policies. In this way, our work follows a style of research that seeks to shed light on observed policy outcomes by linking them to our behavioral understanding of latent human preferences for certain types of outcomes over others (see, e.g., \citep{srivastava2019mathematical,zhu2018value,Lee2018WeBuildAI} for earlier work in this genre).

All of this still leaves us with a basic question. We have seen examples so far (with others to come) of policy-making favoring some level of randomization, while also steering away from completely uniform randomization that would spread risk of harm diffusely across a population. Is there a model that predicts this type of ``intermediate'' position that avoids both a concentration of risk on identifiable victims as well as too diffuse a distribution over the whole population? And can such a model be derived from known psychological models of human behavior? In this work we will argue that a preference for these types of intermediate distributions of risk can be derived naturally from the concept of {\em probability weighting}, one of the most empirically well-grounded human biases studied in behavioral economics~\citep{kahneman2013prospect}, to which we now turn.

\hl{\xhdr{Plan for the remainder of the paper}
Motivated by the premise that understanding people’s perceptions of harm/benefit allocations are crucial in designing acceptable policies, we posit that models that solely rely on expected-value comparisons may miss crucial aspects of human perceptions toward uncertain allocations---which are in part shaped by probability weighting. As a result, expected-value optimizing algorithms may produce allocations that are behaviorally repugnant to people. Our model can partially explain these reactions using one of the fundamental principles in behavioral sciences. To our knowledge, our work is the first to explore the attractiveness of different uncertain allocation policies by exploring \emph{optimal allocations} under probability weighing. We make several connections between the optimal allocation patterns suggested by our theory and real-world policy choices that would be otherwise difficult to explain. 
}
%
\hl{The rest of the paper is organized as follows: in Section~\ref{sec:pw}, we provide an overview of theoretical models of probability weighting as they relate to our work. Section~\ref{sec:model} introduces our new optimization-based framework, which utilizes probability weighting to understand preferences toward policies that lead to uncertain allocations. In In Sections~\ref{sec:harm_analysis} and \ref{sec:benefit_analysis}, respectively, we explore the implications of probability weighting in preferences toward distributions of harm and benefit, and discuss several real-world policy choices that align with the patterns suggested by our theory. We conclude in Section~\ref{sec:conclusion} with a brief summary and several directions for future work.
}

\section{Probability Weighting}\label{sec:pw}

\xhdr{A model based on probability weighting}
Probability weighting begins from the qualitative observation that people tend to overweight small probabilities---behaving as though they are larger than they actually are---and tend to underweight large probabilities---behaving as though they are smaller than they actually are.
More generally, probability weighting is the premise that when faced with an uncertain event of probability $p$, people will tend to behave with respect to this event---for example, when determining risks or evaluating gambles involving the event---as though its probability were not $p$ but a value $w(p)$, the {\em weighted version} of the probability.
This weighting function $w(p)$ has the two properties noted above: that $w(p)$ is larger than $p$ when $p$ is small, and $w(p)$ is smaller than $p$ when $p$ is large.
If we think in terms of the graph of $w(p)$ as a function of $p$, people refer to these properties as the ``\hl{inverse} S-shaped'' nature of the probability weighting curve. 
There are a number of different models that derive \hl{inverse} S-shaped probability weighting curves from simple observations; one influential functional form was provided by \citet{prelec1998probability}, who derived it from a set of underlying axioms about preferences for different types of gambles. (See Figure~\ref{fig:wp} for several examples of Prelec's S-shaped probability weighting functions.)
\hl{The concept of probability weighting has been invoked to explain a number of peculiar behavioral patterns; one of the canonical examples is people's participation in gambling and lotto games \citep{quiggin1991optimal,kahneman2011thinking}.\footnote{To elaborate further on this connection, note that the cost of buying a lotto ticket is always set to be higher than the expected benefit (i.e., the likelihood of winning times the prize), otherwise lottery owners would lose money. Nonetheless, people participate in these games in large numbers. Work in behavioral economics has advanced probability weighting as one explanation for this irrational behavior, via the tendency to over-weigh small probabilities---here the chance of winning the lottery \citep{quiggin1991optimal,kahneman2011thinking}.}}

We use probability weighting here to ask the following basic question.
Suppose there are $r$ units of harm to be allocated across a population of $n$ people, and we are evaluating policies that assign individual $i$ a probability $p_i$ of receiving harm, subject to the constraint that the sum of $p_i$ over all individuals $i$ is $r$.
In the motivating settings discussed so far, it is natural to think of the cost borne by individual $i$ as the perceived probability $w(p_i)$---either because individual $i$ perceives it this way (via the psychological cost of their own uncertainty) or because the rest of society views it this way (via our discomfort at the idea that $i$ is an identifiable victim with a perceived probability $w(p_i)$ of being harmed).
We can therefore ask: which probability distribution minimizes this total cost, the sum of $w(p_i)$ over all individuals $i$?
Notice that this question allows for distinctions among probability distributions that all produce the same total expected harm for the population: in particular, all the distributions under consideration have a total expected harm of $r$, but they can nevertheless differ substantially in the sum of $w(p_i)$ over all individuals $i$.

We find that the distributions minimizing the weighted sum of harm probabilities $w(p_i)$ in fact correspond to intermediate distributions of the type we have been discussing qualitatively: distributions that concentrate the risk on a subset of the population, such that each member of the at-risk subset has a probability of harm that is strictly less than $1$, 
while most of the population has a probability of harm equal to $0$.
The analysis leading to this conclusion involves some subtlety: sums of $S$-shaped functions do not exhibit the nice properties that simpler function classes do, and so minimizing them requires additional complexity in the analysis.

With this model in place, we can also explore the natural complement to this dynamic.
Our discussion thus far has focused on probabilities of {\em harm}, but there is an analogous class of questions about distributing probabilities of {\em benefit} across a population---for example, in the availability of opportunities like higher education or financial assistance programs.
Suppose there are $r$ units of benefit available to the population as a whole, and we are considering policies that assign a probability $p_i$ that individual $i$ receives the benefit.
Which distributions maximize the sum of $w(p_i)$ over all individuals $i$---that is, maximizing the total perceived benefit?  As with risks of harm, we do not argue that such a policy is necessarily desirable, only that it may have added or diminished attractiveness in its perceived impact; to the extent that such policies are favored in practice, the theory of probability weighting might therefore offer a suggestive description.

We find that the distributions maximizing this sum of perceived probabilities of benefit are quite different from the distributions minimizing the sum of perceived probabilities of harm.
In particular, when the total available benefit $r$ is small relative to the size of the population under consideration, the maximizing distribution is a uniform lottery which assigns all $n$ people a probability of $r/n$; but as $r$ increases, the maximizing distribution changes abruptly to one in which a subset of the population receives a portion of the benefit with certainty, and the rest of the population is given a uniform lottery for the remainder.


\begin{table*}
\begin{center}
    
    \begin{tabular}{|  l | p{0.35\textwidth}  | p{0.35\textwidth}  |}
    \hline
     &  Min. perceived harm &  Max. perceived benefit  \\ \hline
     $r = 1$  & $(\delta, 0.5(1-\delta), 0.5(1-\delta), 0, \cdots, 0)$ for some $\delta < \min\{\ell, 1-2\ell\}$ & $\left(\frac{1}{n}, \frac{1}{n}, \cdots, \frac{1}{n}\right)$ for all $n$ larger than a constant $q$ \\ \hline 
$r>1$ & $\left(\delta, \frac{r-\delta}{k}, \frac{r-\delta}{k}, \cdots, \frac{r-\delta}{k}, 0, \cdots, 0 \right)$ for $k<n$, $\delta < \min\{\ell, r-k\ell\}$ & $(\frac{r-j-\gamma}{n-j-1}, \cdots, \frac{r-j-\gamma}{n-j-1}, \gamma, \underbrace{1, \cdots, 1}_{j \text{ persons}})$ for $\gamma > \max\{\ell, r-j-\ell(n-j-1)\}$ \\ 
\hline
    \end{tabular}
    \caption{The summary of our findings regarding the allocation of probabilities across $n$ individuals for a probability weighting $w(.)$ that is monotone, concave up to an inflection point $\ell$, and convex beyond $\ell$.}
\label{table:summary}
\end{center}
\end{table*}

\xhdr{Implications of probability weighting}
Given that a society developing policy seems to favor some probability distributions of harm or benefit over others, even when they have the same expected value, it is natural to ask whether a model based on probability weighting can shed light on the nature of these preferences.
Our modeling activity thus works out what the favored policies would look like if society were seeking to maximize or minimize the total weighted probability.
As we discuss in Sections~\ref{sec:harm_examples} and \ref{sec:benefit_examples}, properties of these minimizing and maximizing distributions \emph{can} be observed in a variety of real-world settings. We consider a number of allocation policies that have been adopted in practice that involve distributions of uncertain harms and benefits that closely resemble what our model suggests are optimal under probability weighting. Because the attractiveness of these policies is difficult to explain otherwise, we present them as inductive evidence that probability weighting may be playing a meaningful role in guiding societal preferences for certain allocations and in determining the actual distributions of harms and benefits in society.

\hhcomment{The last phrase ``playing a role'' may suggest that we consider probability weighting as a mechanism that plays out in people's minds... Can we somehow tone this down?} \hhcomment{I think the following paragraph is mostly left over from the initial sketch of the intro, so it reads a bit abrupt. Can the better writers in the group re-write it please?} \klcomment{I merged the two paras that were above these commments as they were making similar points. Hoda -- feel free to edit, or if you're satisfied can delete this block of comments!}


\xhdr{Prior work on probability weighting}
%
Several models in economics aim to capture people's preferences regarding choices that have uncertain outcomes.
Expected utility theory posits that people are expected-utility maximizers and that they weight probabilities linearly. However, empirical evidence suggests people often do not treat probabilities linearly (see, e.g.,~\citep{quattrone1988contrasting,etchart2004probability,humphrey2004probability,berns2008nonlinear}). Instead, they overweight small probabilities and underweight large probabilities. 
Major alternatives to the expected utility theory, including rank dependent utility~\citep{quiggin1982theory,quiggin2012generalized}, prospect theory~\citep{kahneman2013prospect}, and cumulative prospect theory~\citep{tversky1992advances}, propose the concept of a probability weighting function to capture this behavioral effect. (Note that all these alternatives share the assumption of \emph{additive separability} across outcomes with the expected utility theory.)
A probability weighting function is a widely-studied model of probability distortions in decision making under risk. 
A large number of probability weighting functions have been proposed (see, e.g., \citep{gonzalez1999shape} and \citep{tversky1992advances}).
\citet{prelec1998probability} observes that unlike utility functions, which are characterized by concavity, in empirical studies probability weighting functions are:
\begin{itemize}
\item \textbf{regressive} intersecting the diagonal from above,
\item \textbf{asymmetric} with fixed point at about 1/3,
\item \textbf{s-shaped} concave on an initial interval and convex beyond that,
\item \textbf{reflective} assigning equal weight to a given loss-probability as to a given gain-probability.
\end{itemize}
\citeauthor{prelec1998probability} uses the above observation to axiomatize a subproportional function, $w(p) = \exp{ - ( - \ln p)^\alpha )}$, $0 < \alpha < 1$, that satisfies all four of the above properties, and that has an invariant fixed point and inflection point at $p = 1/e = .37$.
Probability weighting has been featured in various domains, including stock market and the pricing of financial securities (see, e.g.,~\citep{barberis2008stocks}).

In what follows, we build on this prior work, showing that we can derive theoretically what would be the optimal allocation of harms and benefits under probability weighting---allocations that occasionally defy intuition---and we point to a number of concrete cases where we seem to observe such policies in practice.

\section{Behavioral Model}\label{sec:model}
 Consider a society $S$ consisting of $n$ individuals, denoted by $S=\{1,2, \cdots, n\} = [n]$. A policymaker needs to choose a policy $\pi$ that probabilistically allocates some notion of benefit/burden to individuals in $S$. Let $B$ denote the set of all possible levels of benefit/burden that an individual in $S$ can receive. Unless otherwise specified, for simplicity we assume 
$B= \{0,1\}$, with $1$ indicating benefit/burden and $0$ indicating the absence of it. 

A policy $\pi$ distributes a \emph{fixed} level of benefit/harm across individuals $S$. We use the notation $p^{\pi}_i \in [0,1]$ to refer to the probability of benefit/harm imposed on a particular individual $i \in S$ through policy $\pi$. (When $\pi$ is clear from the context, we drop the superscript $\pi$ and use the simplified notation $p_i$ to indicate the probability of individual $i$ receiving the benefit/harm.) For any feasible/admissible policy $\pi$, we assume
\begin{equation}
\sum_{i \in S} p_i = r,
\end{equation}
where $r$ captures the expected amount of benefit/harm that must be distributed.  
(Note that we consider settings in which $r$ is fixed (or changes negligibly) in the number of individuals it is allocated across.)

Following prospect theory, we assume for every individual $i$, there exists an inverse S-shaped function $w_i: [0,1] \rightarrow [0,1]$,
such that $w_i(p)$ determines individual $i$'s perception of probability $p \in [0,1]$. Throughout for simplicity, we assume $w(.)$ and its derivative $w'(.)$ are continuous. 
A concrete instance of a widely-studied probability weighting function is the following \citep{prelec1998probability}:
\begin{equation}
w(p) = e^{-\beta (-\ln{p})^\alpha)}.
\end{equation}
See Figure~\ref{fig:wp} for $\beta=0.5$ and various levels of $\alpha$. 
\begin{figure}
\centering
\includegraphics[width=0.45\textwidth]{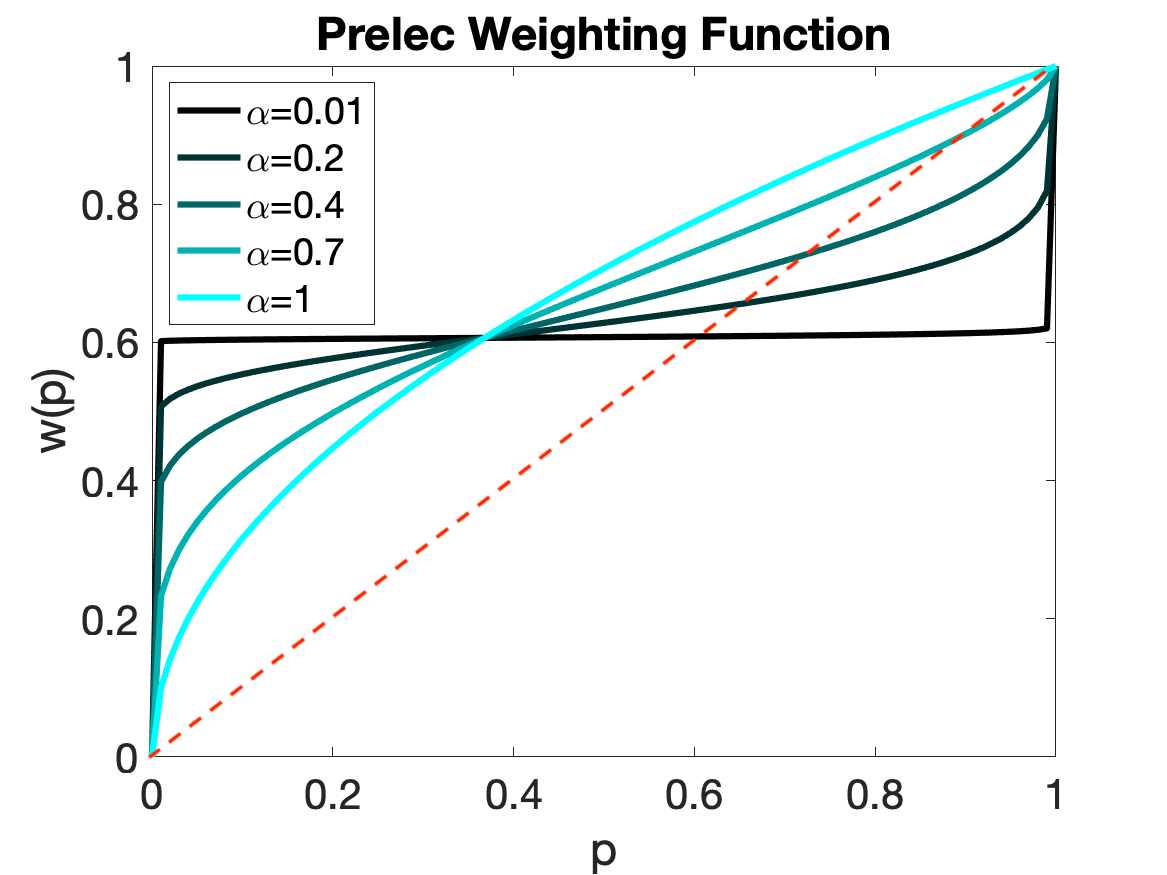}
\caption{Prelec's probability weighting function for $\beta=0.5$}
\label{fig:wp}
\end{figure}
The parameter $\alpha$ determines the \emph{curvature} and $\beta$ determines the \emph{elevation} of the probability weighting function. 


\xhdr{Perceived social welfare} 
Given a policy $\pi$, let $\vp^\pi = (p_1, p_2, \cdots, p_n)$ denote  the distribution of benefit/harm across individuals in $S$ through the policy $\pi$.
We assume that an individual $i$'s judgment of policy $\pi$ can be captured by a score function, $\sigma_i: [0,1]^{n} \rightarrow \mathbb{R}$, which maps $\vp^\pi$ to a real number. The score indicates $i$'s perception regarding the overall benefit/burden that policy $\pi$ imposes on the society $S$. In particular, we assume the score function is additively separable across individuals and is defined as follows:
$$\sigma_i(\vp) = \sum_{j \in S} t_{ji} \times w_i(p_j).$$ 
In the above, $t_{ji}$ indicate the level of \emph{priority} that individual $i$ allocates to $j$ in the context of the allocation problem at hand. For example, the priority $t_{ji}$ may represents $i$'s perception of $j$'s desert/need when allocating benefit, or $j$'s desert/ability to bear the harm when allocating harms. Without loss of generality, we assume $\sum_{j \in S} t_{ji} = 1$ for all $i \in S$ to normalize subjective priorities.   

The overall \emph{perceived welfare} of a policy $\pi$ is calculated by averaging the score function across all individuals, and it is equal to $\sum_{i \in S} \sum_{j \in S} t_{ji} \times w_i(p_j).$ \hhcomment{is this a good place to mentioned that instead of social welfare, one could define the objective in terms of inequality (or a combination of welfare and inequality)?}
For simplicity, throughout we assume that the probability weighting function is the same for all individuals. Under this assumption, the perceived welfare simplifies to $\sigma (\vt, \vp) = \sum_{j \in S} t_j \times w(p_j)$,
where $t_j = \sum_{i \in S} t_{ji}$ is the overall priority of individual $j$ and $\vt=(t_1, \cdots, t_n)$. 
Unless otherwise specified, throughout our analysis we also assume that for all $i,j$, $t_{ji} = 1/n$, that is, priorities are all equal. With this assumption, the perceived welfare simplified to $\sigma(\vp) = \sum_{j \in S} w(p_j)$.

In order to design a policy that probability-weighting individuals perceive positively, 
the policymaker aims to understand the distribution that optimizes the perceived welfare: 
\begin{equation}\label{opt:welfare}
\opt_{\vp \in [0,1]^n}  \sum_{i \in S} w(p_i) \text{ s.t. } \sum_{i \in S} p_i = r.
\end{equation}
In allocating harms, the policymaker is naturally interested in allocations that \emph{minimize} $\sum_{i \in S} w(p_i)$. Conversely, when allocating benefits, he/she is interested in allocations that \emph{maximize} $\sum_{i \in S} w(p_i)$.

In the next two sections, we characterize solution(s) to (\ref{opt:welfare}). Our findings show that perceptions toward harm vs. benefit distributions are fundamentally different.
When allocating \emph{benefit} among $n$ individuals, the perceived-benefit-maximizing solution consists of providing every individual with a non-zero chance of obtaining the benefit.
In contrast, when allocating \emph{harms}, the equal allocation of risk is sub-optimal, and the perceived-harm-minimizing allocation instead  concentrates the risk across $k<n$ individuals leaving the rest to enjoy $0$ risk of harm. 



\section{Perceived Harm Minimizing Allocations}\label{sec:harm_analysis}
We begin our analysis with the case of allocating $r$ units of \emph{harm} among $n$ individuals, and characterize the perceived harm minimizing distributions. We will show that for any given total harm level $r$, the optimal solution to (\ref{opt:welfare}) concentrates the risk on a subset of the population, such that each member of the at-risk subset has a probability of harm bounded away from 1, while most of the population has a probability of harm equal to 0.

Our analysis utilizes the following two lemmas. All omitted proofs can be found in the technical appendix.
\begin{lemma}\label{lem:min_concave}
Let $f: [0,\ell] \longrightarrow [0,1]$ be a strictly concave function, $m \geq 2$ an integer, and $0< c \leq m \ell$ a constant. Let $\vx^* = (\vx^*_1,\cdots,\vx^*_m)$ specify the unique optimal solution to the following minimization problem:
\begin{equation}\label{eq:min_concave}
\min_{x_1,\cdots,x_m \in [0,\ell]} \sum_{i=1}^m f(x_i) \text{ s.t. } \sum_{i=1}^m x_i = c.
\end{equation}
Then for at most one $i \in \{1, \cdots, m\}$, $x^*_i \in (0, \ell)$.
\end{lemma}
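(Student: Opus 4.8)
The plan is to show that the minimum of a sum of strictly concave functions over a simplex-like constraint is attained at an extreme point in all but one coordinate. First I would argue by contradiction: suppose an optimal solution $\vx^*$ has two coordinates, say $x^*_1$ and $x^*_2$, both lying in the open interval $(0,\ell)$. The idea is to perturb these two coordinates in opposite directions, keeping their sum $s = x^*_1 + x^*_2$ fixed (so the constraint $\sum_i x^*_i = c$ is preserved and all coordinates stay in $[0,\ell]$ for small perturbations), and show that the objective can be strictly decreased, contradicting optimality.

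The key computation is to consider the one-variable function $g(t) = f(t) + f(s - t)$ for $t$ ranging over an interval around $x^*_1$ on which both $t$ and $s-t$ stay in $[0,\ell]$. Since $f$ is strictly concave, $g$ is a sum of two strictly concave functions and hence strictly concave in $t$. A strictly concave function on an interval attains its minimum only at the endpoints of that interval, never at an interior point. But $x^*_1$ is an interior point of the feasible range for $t$ (because both $x^*_1$ and $x^*_2 = s - x^*_1$ are strictly between $0$ and $\ell$, so we can move $t$ slightly in either direction), so $g(x^*_1)$ is not the minimum of $g$ — we can decrease $g$, and therefore the full objective, by moving $t$ toward whichever endpoint gives a smaller value. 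This contradicts the optimality of $\vx^*$. Hence at most one coordinate of $\vx^*$ lies in $(0,\ell)$.

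I would also need to note that the problem genuinely has an optimal solution and that it is unique, as the statement asserts — though for the claim "at most one coordinate in $(0,\ell)$" only the contradiction argument above is strictly needed. Existence follows from continuity of $f$ and compactness of the feasible region $\{x \in [0,\ell]^m : \sum_i x_i = c\}$ (nonempty since $0 < c \le m\ell$). Uniqueness is a little more delicate: two distinct minimizers $\vx, \vy$ would, by strict concavity along the segment between them, give a midpoint with strictly smaller objective unless... — actually strict concavity makes the midpoint value \emph{smaller}, which is consistent with both being minimizers only if the objective is not convex, so uniqueness does not come for free from convexity arguments and must instead be derived from the structural characterization (the minimizer puts as many coordinates as possible at $\ell$ or $0$, and the concavity forces a specific such configuration).

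The main obstacle I anticipate is not the perturbation argument itself, which is routine, but making the endpoint-versus-interior reasoning fully rigorous when the feasible range of $t$ is constrained simultaneously by $t \in [0,\ell]$ and $s - t \in [0,\ell]$: one must check that $x^*_1$ is strictly interior to the intersection of these constraints, which is exactly guaranteed by the assumption $x^*_1, x^*_2 \in (0,\ell)$. A secondary subtlety is that the lemma only assumes $f$ is defined and strictly concave on $[0,\ell]$ (continuity at the endpoints is not explicitly assumed here, though the paper assumes $w$ and $w'$ continuous globally), so I would lean on the global continuity assumption from the model section to handle boundary behavior cleanly.
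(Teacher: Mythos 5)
Your proposal is correct and follows essentially the same route as the paper: both argue by contradiction, fix the sum $s=x^*_1+x^*_2$ of two interior coordinates, and use strict concavity along that one-dimensional slice to show the configuration can be strictly improved (the paper explicitly names the improving endpoint, $\{0,\,x^*_1+x^*_2\}$ or $\{x^*_1+x^*_2-\ell,\,\ell\}$, and verifies the inequality from the definition of concavity, whereas you invoke the equivalent fact that a strictly concave univariate function attains its minimum only at the endpoints of its feasible interval). Your side remarks on feasibility, interiority, and the delicacy of the uniqueness claim are accurate and do not affect the core argument.
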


\begin{lemma}\label{lem:min_convex}
Let $f: [\ell,1] \longrightarrow [0,1]$ be a strictly convex function, $m \geq 2$ an integer, and $c \in [m \ell, m]$ a constant. Then $(c/m, \cdots, c/m)$ is the unique optimal solution to the following minimization problem:
\begin{equation}\label{eq:min_convex}
\min_{x_1,\cdots,x_m \in [\ell,1]} \sum_{i=1}^m f(x_i) \text{ s.t. } \sum_{i=1}^m x_i = c.
\end{equation}
\end{lemma}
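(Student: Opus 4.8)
The plan is to prove this directly from Jensen's inequality, using strict convexity once for the bound and once for uniqueness. The key point is that, unlike the concave case of Lemma~\ref{lem:min_concave}, here the constraint set interacts with convexity in the ``easy'' direction: sums of convex functions under a linear constraint are minimized at the balanced point. First I would check that the hypothesis $c \in [m\ell, m]$ is exactly what is needed to make the candidate point feasible: dividing by $m$ gives $c/m \in [\ell,1]$, so $(c/m,\cdots,c/m) \in [\ell,1]^m$, and it trivially satisfies $\sum_{i=1}^m (c/m) = c$.

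Next, for any feasible $(x_1,\cdots,x_m) \in [\ell,1]^m$ with $\sum_{i=1}^m x_i = c$, Jensen's inequality applied to the convex function $f$ with equal weights gives
\begin{equation*}
\frac{1}{m}\sum_{i=1}^m f(x_i) \;\geq\; f\!\left(\frac{1}{m}\sum_{i=1}^m x_i\right) \;=\; f(c/m),
\end{equation*}
i.e. $\sum_{i=1}^m f(x_i) \geq m\, f(c/m)$, so $(c/m,\cdots,c/m)$ attains the minimum. For uniqueness, strict convexity of $f$ forces the Jensen step to be a strict inequality whenever the $x_i$ are not all equal; hence any feasible point achieving the value $m\, f(c/m)$ must have all coordinates equal, and since they sum to $c$, each equals $c/m$. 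This yields both optimality and uniqueness in one stroke, and it does not even require continuity of $f$, so none of the regularity assumptions on $w$ are needed here.

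An alternative route, parallel to the exchange argument used for Lemma~\ref{lem:min_concave}, is to argue by contradiction: if an optimal $\vx^*$ had two unequal coordinates $x^*_i < x^*_j$, replacing both by their average $\tfrac12(x^*_i + x^*_j)$ preserves the sum, stays in $[\ell,1]$ since the average lies between $x^*_i$ and $x^*_j$, and strictly lowers the objective by strict convexity --- contradicting optimality; existence of an optimum here would follow from compactness of the feasible set together with continuity of $f$ (available via the standing assumption that $w,w'$ are continuous). I would present the Jensen version in the paper since it is shorter and avoids the separate existence step. There is no real obstacle in this lemma; the only place the full strength of the hypotheses is used is the one-line feasibility check $m\ell \le c \le m \Rightarrow c/m \in [\ell,1]$, and the ``main difficulty'' is merely to state things carefully enough that strictness of the inequality --- hence uniqueness --- is not lost.
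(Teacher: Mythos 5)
Your proposal is correct, and your primary (Jensen) route is genuinely different from the paper's. The paper proves this lemma by the exchange argument you describe as your alternative: it takes an optimal $\vx^*$, supposes two coordinates are unequal, replaces both by their average $\tfrac12(x^*_1+x^*_2)$, checks feasibility, and invokes strict convexity in the two-point form $\tfrac12 f(x^*_1)+\tfrac12 f(x^*_2) > f\bigl(\tfrac12 x^*_1 + \tfrac12 x^*_2\bigr)$ to contradict optimality. This matches the style of the companion Lemma~\ref{lem:min_concave} but, as you note, it only establishes a necessary condition on minimizers and tacitly presupposes that a minimizer exists (which does follow from compactness of the feasible set and continuity of $w$, though the paper does not say so). Your Jensen argument buys something the paper's does not: it directly exhibits $(c/m,\cdots,c/m)$ as a global minimizer and extracts uniqueness from strictness of the Jensen step, so existence never needs to be addressed separately and no regularity beyond convexity is used. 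The one hypothesis check you flag --- $c \in [m\ell, m]$ giving $c/m \in [\ell,1]$ --- is indeed the only place the constraint interval matters, and you handle it. Both proofs are valid; yours is shorter and logically tighter.
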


Armed with the above two lemmas, we can characterize the perceived harm minimizing allocations of $r$ units of harm among $n$ individuals.

\begin{theorem}\label{thm:min_harm}
Consider the allocation of $r$ units of harm among $n$ individuals. Let $\ell$ specify the inflection point of the probability weighting function, $w(.)$. Let $\vp^*$ be a minimizer of (\ref{opt:welfare}). Then
\begin{enumerate}
\item For at most one individual $i \in S$, $0<p^*_i<\ell$;
\item For any $j \in S$ with $p^*_j \geq \ell$, $p^*_j = p$ where $p \geq \ell$ is a constant.
\end{enumerate}
\end{theorem}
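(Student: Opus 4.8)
The plan is to fix a minimizer $\vp^*$ of (\ref{opt:welfare}) — one exists since $\sum_i w(p_i)$ is continuous and the feasible set $\{\vp \in [0,1]^n : \sum_i p_i = r\}$ is compact — split the coordinates of $\vp^*$ according to which side of the inflection point $\ell$ they fall on, and then apply Lemma~\ref{lem:min_concave} to the ``concave side'' and Lemma~\ref{lem:min_convex} to the ``convex side''. Concretely, set $L = \{i \in S : p^*_i < \ell\}$ and $H = \{i \in S : p^*_i \geq \ell\}$, and let $c_L = \sum_{i \in L} p^*_i$, $c_H = \sum_{i \in H} p^*_i$, so $c_L + c_H = r$.

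The first step is to note that $\vp^*|_L$ must itself be optimal for the subproblem $\min \sum_{i \in L} w(p_i)$ subject to $\sum_{i \in L} p_i = c_L$, $p_i \in [0,\ell]$, and likewise $\vp^*|_H$ must be optimal for $\min \sum_{i \in H} w(p_i)$ subject to $\sum_{i \in H} p_i = c_H$, $p_i \in [\ell,1]$. This follows from a swapping argument: if some feasible $\vy$ on $L$ achieved a strictly smaller partial weighted sum, then overwriting the $L$-coordinates of $\vp^*$ with $\vy$ would keep the total equal to $r$ while strictly lowering the global objective, contradicting optimality of $\vp^*$; the argument for $H$ is symmetric.

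The second step is to invoke the two lemmas. For the concave side, if $|L| \geq 2$ and $c_L > 0$ we apply Lemma~\ref{lem:min_concave} with $f = w|_{[0,\ell]}$ (strictly concave), $m = |L|$, $c = c_L$ — noting $0 < c_L < |L|\ell$ since each $p^*_i < \ell$, so the hypothesis $0 < c \le m\ell$ holds — which tells us the unique minimizer of the $L$-subproblem has at most one coordinate in $(0,\ell)$; since $\vp^*|_L$ is that minimizer, at most one $i \in L$ has $p^*_i \in (0,\ell)$. No $i \in H$ has $p^*_i \in (0,\ell)$ since $p^*_i \geq \ell$ there, and the cases $|L| \le 1$ or $c_L = 0$ are trivial, so part~1 follows. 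For the convex side, if $|H| \geq 2$ we apply Lemma~\ref{lem:min_convex} with $f = w|_{[\ell,1]}$ (strictly convex), $m = |H|$, $c = c_H \in [|H|\ell, |H|]$ (each $p^*_i \in [\ell,1]$), giving that $\vp^*|_H$ equals $(c_H/|H|, \ldots, c_H/|H|)$; hence every $i \in H$ has $p^*_i = p$ for the constant $p = c_H/|H| \geq \ell$, while $|H| \le 1$ makes the statement vacuous. This gives part~2.

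The only genuine subtlety, and the place to be careful, is the bookkeeping around the boundary value $\ell$: one must assign coordinates equal to $\ell$ consistently to one side of the partition (here, to $H$), confirm that the restricted vectors actually lie in the closed domains $[0,\ell]$ and $[\ell,1]$ demanded by the lemmas, and dispatch the degenerate cases $|L|, |H| \in \{0,1\}$ and $c_L = 0$ where the lemmas' hypotheses ($m \geq 2$, resp.\ $c > 0$) fail but the conclusions hold for trivial reasons. Everything else is a direct application of the two lemmas, which is precisely why they were separated out in advance.
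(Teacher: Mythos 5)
Your proof is correct and follows essentially the same route as the paper's: both arguments split the coordinates at the inflection point $\ell$ and invoke Lemma~\ref{lem:min_concave} on the concave side and Lemma~\ref{lem:min_convex} on the convex side to rule out the forbidden configurations. The only difference is cosmetic---the paper applies each lemma to a pair of offending coordinates ($m=2$) and derives a contradiction, whereas you first establish that the restriction of $\vp^*$ to each block is optimal for its subproblem and then apply the lemmas with $m=|L|$ and $m=|H|$; your handling of the boundary cases and the degenerate cases $|L|,|H|\le 1$ is careful and adds rigor the paper leaves implicit.
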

\begin{proof}
We first show that in the perceived harm minimizing allocation, at most one individual receives $0<p^*_i <\ell$. Suppose not and there are at least two individuals (say $1$ and $2$) such that $0<p^*_1 \leq p^*_2 < \ell$. Next, we apply Lemma~\ref{lem:min_concave} to $w:[0, \ell] \longrightarrow [0,1]$, $m=2$, and $c = p^*_1 + p^*_2$ to argue that there exist an operation to improve the objective value in (\ref{opt:welfare}). Let $(p'_1,p'_2)$ specify the unique optimal solution to the following minimization program:
\begin{equation}
\min_{p_1,p_2 \in [0,\ell]} w(p_1) + w(p_2) \text{ s.t. } p_1 + p_2 = p^*_1 + p^*_2.
\end{equation}
Then according to Lemma~\ref{lem:min_concave}, for at most one $i \in \{1,2\}$, $p'_i \in (0, \ell)$, otherwise $w(p_1) + w(p_2)$ can be improved. 
If there exists an operation to improve $w(p_1) + w(p_2)$, the same operation can be utilized to improve (\ref{opt:welfare}) keeping $p^*_3, \cdots, p^*_n$ (and subsequently, $w(p^*_3), \cdots, w(p^*_n)$ in  (\ref{opt:welfare}) ) unchanged. This would contradicts our initial assumption that for at least two individuals $0<p^*_1 \leq p^*_2 < \ell$. 

Second, we show that  for all $j \in S$ with $p^*_j \geq \ell$, $p^*_j = p$ where $p \geq \ell$ is a constant. Suppose not, and there are two individuals with unequal probabilities in the $[\ell,1]$ interval. Without loss of generality, we assume $\ell \leq p^*_1 < p^*_2$. Applying Lemma~\ref{lem:min_convex} to $w:[\ell,1] \longrightarrow [0,1]$, $m=2$, and $c = p^*_1 + p^*_2$, we can improve the objective value of (\ref{opt:welfare}) by redistributing the probability of harm $(p^*_1 + p^*_2)$ among individuals 1,2 as follows: set both $p_1 = p_2 = (p^*_1 + p^*_2)/2$. Since this operation reduces the objective value, it contradicts the assumption that there are two individuals with unequal probabilities in $[\ell,1].$ 
\end{proof}


Theorem~\ref{thm:min_harm} implies that for any given total harm level $r$, the number of individuals who receive a positive probability of harm is always bounded regardless of the number of individuals $n$. 
Let $k(n,r)$ specify the number of individuals whose probability of harm is greater than or equal to $\ell$ through the optimal solution, $\vp^*$. (More precisely, $k(n,r) = \vert \{i \in S \text{ s.t. } p^*_i = p \geq \ell \}\vert$). 
\begin{corollary}
For any given total harm level $r$, there exists $K \in \mathbb{N}$ such that $k(n,r) \leq K$ for any $n \in \mathbb{N}$. 
\end{corollary}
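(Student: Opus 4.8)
The plan is to read off the structure of an optimal solution from Theorem~\ref{thm:min_harm} and then use the feasibility constraint $\sum_{i\in S}p^*_i = r$ to bound the number of ``heavily loaded'' coordinates. By part~(2) of the theorem, every individual $j$ with $p^*_j\ge\ell$ shares a single common value $p\ge\ell$; by part~(1), at most one further individual carries a probability in $(0,\ell)$, and all remaining individuals carry probability $0$. Thus the optimal vector $\vp^*$ has at most $k(n,r)$ coordinates equal to $p\ge\ell$, at most one coordinate equal to some $\delta\in(0,\ell)$, and the rest equal to $0$.

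Next I would invoke nonnegativity and the budget constraint: since all $p^*_i\ge 0$ and exactly $k(n,r)$ of them equal $p$, we have
\begin{equation*}
r \;=\; \sum_{i\in S} p^*_i \;\ge\; k(n,r)\cdot p \;\ge\; k(n,r)\cdot \ell .
\end{equation*}
Because $\ell$ is the inflection point of an inverse-$S$-shaped weighting function, $\ell\in(0,1)$ and in particular $\ell>0$, so we may divide to get $k(n,r)\le r/\ell$. Setting $K=\lfloor r/\ell\rfloor$ (an integer depending only on $r$ and on the fixed weighting function, not on $n$) yields $k(n,r)\le K$ for every $n\in\mathbb{N}$, which is exactly the claim.

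There is essentially no serious obstacle here: the corollary is a one-line consequence of the theorem's structural characterization together with the linear budget constraint. The only points worth stating carefully are (i) that the common value $p$ of the at-risk group is genuinely $\ge\ell$ (so that each such coordinate consumes at least $\ell$ of the total budget $r$), which is precisely part~(2) of Theorem~\ref{thm:min_harm}, and (ii) that $\ell>0$, so that the division is legitimate and $K$ is finite. One could optionally remark that this bound is essentially tight in form, since the optimal allocations exhibited in Table~\ref{table:summary} do place $k\approx r/\ell$ individuals at a common probability slightly above $\ell$.
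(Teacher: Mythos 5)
Your proof is correct and follows essentially the same route as the paper: the paper derives the bound $r-\ell\leq k(n,r)\leq r/\ell$ immediately after the corollary by exactly this argument, namely that each of the $k(n,r)$ at-risk individuals receives a common probability $p\geq\ell$, so the budget constraint forces $k(n,r)\cdot\ell\leq r$. Your choice $K=\lfloor r/\ell\rfloor$ is the natural uniform-in-$n$ bound, and your two stated caveats (that $p\geq\ell$ and that $\ell>0$) are the right ones.
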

It is also easy to see that for a fixed $n$, $k(n,r)$ must increase roughly linearly in $r$. This is simply because according to the above theorem, there is at most one individual with $0<p^*_i<\ell$. And when the remaining $(r-p^*_i)$ units of harm is allocated equally among $k(n,r)$ individuals, it leads each one of them to receive a probability harm $p \geq \ell$. Since $p = \frac{(r-p^*_i)}{k(n,r)}$ and $0 \leq p^*_i < \ell$, we have
{\footnotesize
\begin{eqnarray*}
&&\frac{r - \ell}{k(n,r)} \leq p \leq \frac{r}{k(n,r)} \\
&\Rightarrow & \frac{r-\ell}{p} \leq k(n,r) \leq \frac{r}{p} \\
(\ell \leq p \leq 1) &\Rightarrow & r-\ell\leq k(n,r) \leq \frac{r}{\ell}
\end{eqnarray*}
}
So $k(n,r)$ is sandwiched between two linear functions of $r$. Figure~\ref{fig:k-r} demonstrates $k$ as a function of $r$ for $n=50$ when $w$ is the Prelec probability weighting function.

Through our simulations, we observe that for settings depicted in Figure~\ref{fig:k-r}, there exist no individual with $p^*_i < \ell$, so $r$ is divided equally among $k(r)$ individuals. Based on Figure~\ref{fig:k-r}, we conjecture that $k(r) = \floor{c\times r}$ for a constant $c$. In what follows, we derive the slope $c$ for $w(p) = e^{-\beta (-\ln{p})^\alpha)}$.

For a given $r$, the objective value (\ref{opt:welfare}) associated with dividing $r$ equally between $x$ people is equal to $x w(r/x)$. While semantically, $x$ has to be an integer, we can define the function $g(r,x) = x w(r/x)$ over real numbers to facilitate optimization (over a continuous domain of $x$).
Now note that $g(r,x)$ is the perspective function of $w: [\ell,1] \longrightarrow [0,1]$. Since $w$ is convex in the feasible region for $r/x$, we know that its perspective $g$ must also convex~\citep{boyd2004convex}. So for a fixed value of $r$, there exists a unique minimizer $x^*$ for $g(r,x) = x w(r/x)$. 
Approximating $k(r)$ with $c \times r$, we can write $g(r, k) = g(r, cr) = cr e^{-\beta (\log{c})^\alpha}$. Define $h(r,c) := cr e^{-\beta (\log{c})^\alpha}$. We can derive $c$ by setting $ \frac{\partial g}{\partial k}$ to 0. Note that
$$\frac{\partial g}{\partial k} = \frac{\partial g}{\partial h}  \frac{\partial h}{\partial c} \frac{\partial c}{\partial k}.$$
Also, it is easy to observe that $\frac{\partial g}{\partial h} =1$ and $\frac{\partial c}{\partial k} = 1/r > 0$. So it must be the case that $\frac{\partial h}{\partial c} =0.$
{\footnotesize
\begin{eqnarray*}
\frac{\partial h}{\partial c} &=&  r e^{-\beta (\log{c})^\alpha} + c r e^{-\beta (\log{c})^\alpha} \times (-\beta) \times \alpha \times \frac{1}{c} \times (\log{c})^{\alpha-1} \\ 
&=& r e^{-\beta (\log{c})^\alpha} \left( 1 - \beta \alpha (\log{c})^{\alpha-1} \right)   = 0
\end{eqnarray*}
}
From the last equation, we obtain $c = (\alpha \beta)^{\frac{1}{1-\alpha}}$.

\begin{figure}
\centering
\includegraphics[width=0.45\textwidth]{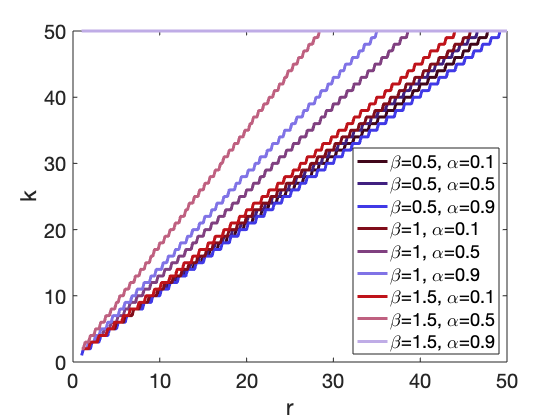}
\caption{The optimal $k$ for various values of $r$ ($n=50$). $k$ is roughly a linear function of $r$ with a slope of $(\alpha \beta)^{\frac{1}{1-\alpha}}$. }
\label{fig:k-r}
\end{figure}

As another concrete example, we can derive the perceived harm minimizing policy for the special case of $r=1$ and the Prelec probability function:
\begin{theorem}\label{thm:min-1}
Consider the distribution $r=1$ units of harm among $n$ individuals. Suppose $w(.)$ is the Prelec function with $\beta=1$. For any $\alpha<1$ and $n>1$, $(0.5, 0.5, 0, \cdots, 0)$ is the unique minimizer of (\ref{opt:welfare}).
\end{theorem}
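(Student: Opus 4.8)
The plan is to use Theorem~\ref{thm:min_harm} to cut the feasible set down to a short list of parametrized candidates, and then to beat each of them against $(1/2,1/2,0,\dots,0)$. For the Prelec weighting with $\beta=1$ the inflection point coincides with the fixed point $\ell=1/e$, so $w(0)=0$, $w(1)=1$, $w(\ell)=\ell$, with $w(p)>p$ on $(0,\ell)$ and $w(p)<p$ on $(\ell,1)$. By Theorem~\ref{thm:min_harm}, every minimizer has $k$ coordinates equal to a common value $p\ge\ell$, at most one coordinate $\delta\in(0,\ell)$, and the rest equal to $0$, with $kp+\delta=1$. Since each common coordinate is at least $1/e$ and $kp\le1$, we get $k\le e$, hence $k\in\{1,2\}$ ($k=0$ would force $\sum_i p_i\le\delta<1$). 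So it suffices to rule out, as strictly worse than $2w(1/2)$: (i) $\vp=(1,0,\dots,0)$; (ii) $\vp=(1-\delta,\delta,0,\dots,0)$ with $\delta\in(0,\ell)$; and (iii) $\vp=(\tfrac{1-\delta}{2},\tfrac{1-\delta}{2},\delta,0,\dots,0)$ with $\delta\in(0,1-2\ell]$, the bound $\delta\le1-2\ell$ being what keeps $\tfrac{1-\delta}{2}\ge\ell$.

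Two pieces are immediate. For (i): since $0<\ln2<1$ and $0<\alpha<1$ we have $(\ln2)^\alpha>\ln2$, so $w(1/2)=e^{-(\ln2)^\alpha}<\tfrac12$ and $2w(1/2)<1$, the objective at $(1,0,\dots,0)$. I also record the boundary value of family (ii): as $w$ is strictly convex on $[\ell,1]$ and both $\tfrac12\pm t$ lie there for $t\in[0,\tfrac12-\ell]$, the map $t\mapsto w(\tfrac12-t)+w(\tfrac12+t)$ has positive derivative $w'(\tfrac12+t)-w'(\tfrac12-t)$ on that interval, so $H(\ell):=w(\ell)+w(1-\ell)>2w(1/2)$.

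For family (iii), put $g(\delta)=2w(\tfrac{1-\delta}{2})+w(\delta)$, so $g(0)=2w(1/2)$, and I must show $g>2w(1/2)$ on $(0,1-2\ell]$. Since $w$ is increasing and strictly convex on $[\ell,\tfrac12]\ni\tfrac{1-\delta}{2}$, we get $2\bigl(w(1/2)-w(\tfrac{1-\delta}{2})\bigr)=2\!\int_{(1-\delta)/2}^{1/2}w'(t)\,dt<w'(1/2)\,\delta$. A one-line computation gives $\frac{d}{d\delta}\frac{w(\delta)}{\delta}=\frac{w(\delta)}{\delta^{2}}\bigl(\alpha(-\ln\delta)^{\alpha-1}-1\bigr)<0$ on $(0,\ell]$ (there $-\ln\delta\ge1$), so $w(\delta)/\delta$ is decreasing and $w(\delta)\ge\tfrac{w(1-2\ell)}{1-2\ell}\,\delta$ on $(0,1-2\ell]$; with $b:=-\ln(1-2/e)>1$ this gives $\tfrac{w(1-2\ell)}{1-2\ell}=e^{\,b-b^{\alpha}}>1$. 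Hence it suffices to verify $w'(1/2)\le1$; since $w'(1/2)=2\alpha(\ln2)^{\alpha-1}e^{-(\ln2)^{\alpha}}$, this is $L(\alpha):=(\ln2)^{\alpha}-\ln2-\ln\alpha-(\alpha-1)\ln\ln2\ge0$, which follows from $L(1)=0$ together with $L'(\alpha)=\ln(\ln2)\bigl((\ln2)^{\alpha}-1\bigr)-1/\alpha<0$ on $(0,1)$ (the first term lies in $(0,1)$ while $1/\alpha>1$). Chaining the displays, $g(\delta)-2w(1/2)>w(\delta)-w'(1/2)\,\delta>0$.

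Family (ii) is the crux. Let $H(\delta)=w(\delta)+w(1-\delta)$ on $[0,\ell]$; since $H(0)=1>2w(1/2)$ and $H(\ell)>2w(1/2)$, it is enough to show $H$ is strictly unimodal (increasing, then decreasing) on $[0,\ell]$, for then $H(\delta)>\min\{H(0),H(\ell)\}>2w(1/2)$ on $(0,\ell)$. Here $H'(\delta)=w'(\delta)-w'(1-\delta)$ with $H'(\ell)=w'(\ell)-w'(1-\ell)<0$ ($\ell$ is the global minimum of $w'$) and $H'(0^+)=+\infty$ --- a Prelec-specific asymptotic, since under the substitution $v=-\ln p$ one has $w'=\alpha v^{\alpha-1}e^{v-v^{\alpha}}$, so $w'(\delta)$ blows up strictly faster than $w'(1-\delta)$ as $\delta\to0^+$. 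Thus $H'$ vanishes at least once on $(0,\ell)$, and unimodality is exactly the statement that it vanishes only once. I expect \emph{this single-crossing fact to be the main obstacle}: it does not follow from the generic S-shape of $w$ (indeed it fails at $\alpha=1$, where every feasible $\vp$ is optimal and the estimates above degenerate), and I would establish it by working directly with the explicit form --- tracking $H''(\delta)=w''(\delta)+w''(1-\delta)$ (negative near $0$, where $|w''(\delta)|$ dominates; positive near $\ell$, where $w''(1-\delta)\to w''(1-\ell)>0$ while $w''(\delta)\to0$) and showing $H''$ changes sign once, so that $H$ is concave-then-convex and $H'$ --- falling from $+\infty$, then rising toward a negative value --- has a unique zero; equivalently this is a monotonicity/level-set statement for the V-shaped map $v\mapsto\alpha v^{\alpha-1}e^{v-v^{\alpha}}$. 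Granting this, every minimizer is either of type (i)--(iii) --- each strictly beaten by $(1/2,1/2,0,\dots,0)$ --- or the vector with two coordinates equal to $1/2$ and the rest $0$, which the characterization pins down uniquely; this proves the claim.
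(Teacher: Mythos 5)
Your overall architecture differs from the paper's: the paper proves Theorem~\ref{thm:min-1} by induction on $n$, showing that for $n\geq 6$ the two smallest coordinates of any minimizer sum to at most $2/n<1/e$ and can therefore be merged inside the concave region (forcing a zero coordinate and reducing to $n\leq 5$), after which a short local analysis handles the base case. You instead invoke Theorem~\ref{thm:min_harm} to enumerate the candidate shapes directly, which is a legitimate and arguably cleaner route, and your treatment of families (i) and (iii) is complete and correct: the bound $w'(1/2)\leq 1$ via the function $L$ checks out, as does the convexity estimate $2\bigl(w(1/2)-w(\tfrac{1-\delta}{2})\bigr)<w'(1/2)\,\delta$.

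The problem is family (ii). You reduce it to the claim that $H(\delta)=w(\delta)+w(1-\delta)$ exceeds $2w(1/2)$ on $(0,\ell)$, observe that the endpoint values $H(0)=1$ and $H(\ell)$ both beat $2w(1/2)$, and then need $H$ to be unimodal (equivalently, $H'$ to cross zero only once) to conclude that the interior cannot dip below the endpoints. You state explicitly that you have not established this single-crossing property and only sketch how you ``would'' prove it by tracking sign changes of $H''$. That is a genuine unproven step, not a routine verification: the infimum of $H$ over $(0,\ell]$ is numerically quite close to $2w(1/2)$ (for $\alpha=0.9$ the margin is of order $10^{-3}$), so the conclusion really does hinge on controlling the interior behavior of $H$, and nothing in the generic concave--convex structure of $w$ delivers it---as you note, the estimate degenerates at $\alpha=1$. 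As written, the argument rules out candidates (i) and (iii) but not (ii), so the proof is incomplete. (For what it is worth, the paper's own base-case discussion also only analyzes the configuration with two coordinates in the convex region, so the $k=1$ family is a real subtlety under either approach; your write-up has the virtue of making the missing step explicit, but it does not close it.)
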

\begin{proof}
The proof is by induction. The base holds for $n \leq 5$.
(To verify the base, first note that we can have at most two $i$'s with $p_i$'s in the convex region because $3\times 1/e > 1 = r$. A local improvement argument establishes that we can have at most one individual with an allocated probability of $x \leq 1-2/e$ in the concave region. It is easy to see that for $x \in (0, 1-2/e]$, $w(x) + 2 w\left( 0.5(1-x) \right)$ is increasing in $x$, which establishes that the minimum happens at $x=0$). 

The induction hypothesis is that the statement holds for a value of $n \geq 5$.
Next, we show that it holds for $n>5$ as follows: Let vector $\vp$ be the optimal solution to (\ref{opt:welfare}). If there exists a component $0 \leq i \leq n$ with $p_i=0$, then we can apply the induction hypothesis to the remaining $(n-1)$ individuals and we are done.

Otherwise, pick the component $i$ for which $p_i$ is minimum. We must have $p_i  \leq  1/n$ because all $p$'s should add up to 1. We must also have at least one $j \neq i$ for which $p_j  \leq  (1-p_i)/(n-1)$. (Otherwise, the sum of all $p$'s would be greater than $(n-1) \times (1-p_i)/(n-1) + p_i = 1$, which is a contradiction. 

Now if we set the $i$'th component to 0 and $j$'th component to $p_j + p_i$, we have reduced the objective value $w(p_1)+ \cdots + w(p_n)$. This is because 
$p_i + p_j  \leq   p_i + (1-p_i)/(n-1)  \leq  2/n$
(it is easy to verify that $p_i + (1-p_i)/(n-1)$ is increasing in $p_i$ for any $n \geq 3$, so $p_i + (1-p_i)/(n-1)$ is maximized at $p_i = 1/n$ (the upper bound on $p_i$) and it amounts to $2/n$ there).

Now observe that for $n \geq 6$, $2/n < 1/e$.\footnote{Note that for $\beta=1$, $\ell = 1/e$ regardless of the value of $\alpha$.} So when increasing $p_j$ to $p_j + p_i$ and reducing $p_i$ to 0, we remain entirely in the concave region of $w$. So this action reduces the sum of weighted probabilities at $i$'th and $j$'th components. More precisely, because $w$ is concave in this region, we have:
\begin{eqnarray*}
&& w(p_i) - w(0)  \geq  w(p_i + p_j) - w(p_j)\\
&\Leftrightarrow  & w(p_i) + w(p_j) > w(p_i + p_j) + w(0)
\end{eqnarray*}
\end{proof}

\subsection{Heterogeneous Priorities}

Next, we focus on the case of heterogeneous priorities (i.e., non-uniform $\vt$), and characterize the optimal solution(s) when individuals have various priority levels.
Recall that the optimization problem with heterogeneous priorities can be written as follows:
\begin{equation}\label{opt:welfare_heter}
\opt_{\vp \in [0,1]^n}  \sum_{i \in S} t_i \times w(p_i)\text{ s.t. } \sum_{i \in S} p_i = r.
\end{equation}
 where $t_i = \sum_{j \in S} t_{ij}$ is the overall priority of individual $i$. We will use the notation $\sigma (\vt, \vp)$ to refer to the objective function of (\ref{opt:welfare_heter}), and $C$ to refer to all feasible solutions to it, i.e., $C = \{\vp \in [0,1]^n \vert \sum_{i=1}^n p_i = r \}$. Note that $C$ does not vary with $\vt$. 
Let $\sigma^*(\vt)$ be the value function of (\ref{opt:welfare_heter}), and $C^*(\vt)$ its set of optimal solutions for parameter $\vt$. Since $C$ is constant and therefore, continuous (i.e., both upper and lower hemicontinuous) in $\vt$, we can apply the Maximum theorem~\citep{berge1997topological} to establish that the correspondence $C^*(\vt)$ is upper-hemicontinuous in $\vt$. 
\begin{corollary}
Let $\sigma^*(\vt)$ be the value function of (\ref{opt:welfare_heter}), and $C^*(\vt)$ its set of optimal solutions. Then $\sigma^*$ is continuous and $C^*$ is upper hemicontinuous in $\vt$ with nonempty and compact values.
\end{corollary}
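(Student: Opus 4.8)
The plan is to obtain this corollary as an immediate application of Berge's Maximum Theorem, so the real content is just verifying its hypotheses for our parametrized program. Throughout, fix the total harm/benefit level $r$ with $0 \le r \le n$, so that the feasible set is nonempty in every regime we consider.

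First I would record the properties of the constraint correspondence. Since $C = \{\vp \in [0,1]^n : \sum_{i \in S} p_i = r\}$ does not depend on $\vt$, the correspondence $\vt \mapsto C$ is constant, and a constant correspondence is trivially both upper and lower hemicontinuous (the preimage of any open set is either the whole parameter space or empty). Moreover $C$ is nonempty — for instance $(r/n,\dots,r/n) \in C$ — and compact, being the intersection of the compact cube $[0,1]^n$ with the closed affine hyperplane $\{\sum_i p_i = r\}$. Second, I would note that the objective $\sigma(\vt,\vp) = \sum_{i \in S} t_i\, w(p_i)$ is jointly continuous in $(\vt,\vp)$: $w$ and $w'$ are assumed continuous, each term $(\vt,\vp) \mapsto t_i\, w(p_i)$ is a product of continuous maps, and a finite sum of continuous functions is continuous.

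With a jointly continuous objective and a nonempty, compact-valued, continuous constraint correspondence, Berge's Maximum Theorem~\citep{berge1997topological} applies directly in its maximization form when we maximize perceived benefit, and in its minimization form (equivalently, apply the theorem to $-\sigma$) when we minimize perceived harm. It yields exactly the three claimed conclusions: the value function $\sigma^*(\vt)$ is continuous; the optimizer correspondence $C^*(\vt)$ is upper hemicontinuous; and $C^*(\vt)$ is nonempty and compact-valued. Nonemptiness also follows from the Weierstrass extreme value theorem applied to the continuous function $\sigma(\vt,\cdot)$ on the compact set $C$, and compactness of $C^*(\vt)$ follows because it is a closed subset of the compact set $C$ (closedness being a standard consequence of continuity of $\sigma$).

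There is essentially no obstacle in this argument; the only points requiring a word of care are that the operator $\opt$ in~(\ref{opt:welfare_heter}) stands for both minimization and maximization (handled uniformly by passing to $-\sigma$ when needed), and that the feasibility of the constraint, i.e.\ $C \neq \emptyset$, presupposes $r \le n$, which holds in all settings under consideration.
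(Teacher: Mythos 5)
Your proposal is correct and follows exactly the route the paper takes: the paper likewise observes that the constraint set $C$ is constant (hence both upper and lower hemicontinuous, nonempty, and compact) and invokes Berge's Maximum Theorem to conclude continuity of $\sigma^*$ and upper hemicontinuity of $C^*$ with nonempty compact values. The additional details you supply (joint continuity of the objective, the Weierstrass and closed-subset arguments, and the reduction of minimization to maximization of $-\sigma$) are consistent elaborations of the same argument.
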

In particular, the correspondence is upper-hemicontinuous at $\vt = \ones$ (the case of homogeneous priorities). Therefore, the optimal solution for values of $\vt$ sufficiently close to $\ones$ are close to the optimal solutions at $\ones$ -- which we have already characterized. 

But for values of $\vt$ that are \emph{not} sufficiently close to $\ones$, does the support and shape of the optimal solution substantially change? The answer is no, as long as $\vt \ggcurly \zeros$ (that is, for all $i \in S$, $t_i>0$).

\begin{proposition}\label{prop:min_heter}
Consider the distribution of $r$ units of harm among $n$ individuals with priorities specified through vector $\vt$. For any $\vt \ggcurly \zeros$ and any $\vp^* \in C^*(\vt)$,
\begin{enumerate}
\item There exists at most one individual $i$ with $0<p^*_i<\ell$;
\item For all $j \in S$ with $p^*_j \geq \ell$, $t_j \times w'(p^*_j) = c$ where $c$ is a constant.
\end{enumerate}
\end{proposition}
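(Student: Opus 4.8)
The plan is to follow the same local-improvement strategy used for Theorem~\ref{thm:min_harm}, but now tracking the priority weights $t_j$ carefully, since the objective $\sum_i t_i w(p_i)$ is no longer symmetric in the coordinates. For part (1), I would argue by contradiction: suppose two individuals, say $1$ and $2$, have $0<p^*_1\le p^*_2<\ell$. I would then consider the two-variable subproblem $\min_{p_1,p_2\in[0,\ell],\ p_1+p_2=p^*_1+p^*_2} t_1 w(p_1)+t_2 w(p_2)$, holding all other coordinates fixed. The key observation is that $p\mapsto t_j w(p)$ is still strictly concave on $[0,\ell]$ (a positive multiple of a strictly concave function), and the feasible sum $c=p^*_1+p^*_2$ satisfies $0<c\le 2\ell$, so Lemma~\ref{lem:min_concave} applies with $m=2$ to the weighted concave function — here I would apply the lemma in the slightly more general form where the two summands are $t_1 f$ and $t_2 f$; one checks the proof of Lemma~\ref{lem:min_concave} only uses concavity of each summand, which holds. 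Hence the minimum of the subproblem has at most one coordinate interior, so the configuration $(p^*_1,p^*_2)$ with both in $(0,\ell)$ is not optimal for the subproblem, and substituting the improved pair into $\vp^*$ strictly decreases the global objective — contradicting $\vp^*\in C^*(\vt)$. (If the Lemma as stated is only for a single function $f$, I would instead redo its two-line concavity computation directly with the two weights in place; this is routine.)

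For part (2), the symmetric averaging trick from Lemma~\ref{lem:min_convex} no longer works, because shifting mass between two coordinates with different priorities does not leave the weighted objective unchanged under the naive average. Instead I would use a first-order (KKT / Lagrangian) argument. Fix the set $A=\{j: p^*_j\ge \ell\}$; for $j\in A$ we have (generically, and certainly if more than one such coordinate has $p^*_j\in(\ell,1)$) an interior optimum in the convex region, so stationarity of $\sum_{j\in A} t_j w(p_j)$ subject to $\sum_{j\in A} p_j = $ const forces $t_j w'(p^*_j)=c$ for a common multiplier $c$, for all $j\in A$ that are not at the boundary $p^*_j=1$. I would need to handle the boundary cases $p^*_j=\ell$ and $p^*_j=1$ separately: at $p^*_j=1$ the KKT condition becomes an inequality $t_j w'(1)\le c$, and at $p^*_j=\ell$ one gets $t_j w'(\ell^+)\ge c$ together with the concave-side condition; the cleanest statement is therefore "$t_j w'(p^*_j)=c$ for all $j$ with $p^*_j\in(\ell,1)$," and I would phrase the proposition's conclusion that way (or note that the constant $c$ in the statement is exactly this common multiplier). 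To justify using first-order conditions rigorously I would invoke continuity of $w'$ (assumed in Section~\ref{sec:model}) so that the Lagrangian stationarity is valid on the open region $(\ell,1)$.

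The main obstacle I anticipate is the second part, specifically the careful bookkeeping of which coordinates lie strictly inside $(\ell,1)$ versus on the boundary, since the heterogeneous weights break the clean "all equal" conclusion and replace it with an "equal marginal cost" condition that only holds at interior points. In particular, if two coordinates $j,k\in A$ had $t_j w'(p^*_j)\ne t_k w'(p^*_k)$ with both $p^*_j,p^*_k\in(\ell,1)$, I would explicitly construct the perturbation $p_j\mapsto p^*_j+\epsilon$, $p_k\mapsto p^*_k-\epsilon$ and compute the first-order change in the objective as $\epsilon(t_j w'(p^*_j)-t_k w'(p^*_k))$, which is nonzero and can be made negative by choosing the sign of $\epsilon$ — a contradiction with optimality. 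The condition $\vt\ggcurly\zeros$ is used to ensure every $t_j>0$, so that the concavity in part (1) is strict and the marginal-cost equation in part (2) is non-degenerate; I would flag that if some $t_j=0$ the corresponding coordinate is irrelevant to the objective and the characterization can fail, which is exactly why the hypothesis is needed.
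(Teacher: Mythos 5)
Your proposal is correct and follows essentially the same route as the paper: a concavity-based local-improvement (endpoint) argument on the two-coordinate sub-problem for part (1), and a stationarity/Lagrangian argument on the convex sub-problem for part (2). The boundary caveat you flag for part (2)---that the equality $t_j w'(p^*_j)=c$ can only be guaranteed for coordinates with $p^*_j\in(\ell,1)$, since at $p^*_j=1$ stationarity degrades to a KKT inequality---is a genuine issue that the paper's own proof also glosses over, so your more careful phrasing is, if anything, an improvement.
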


Let $k(\vt,r)$ specify the number of individuals whose probability of harm is greater than or equal to $\ell$ through an optimal solution, $\vp^* \in C^*(\vt)$. Given the above Proposition and with a reasoning similar to the one provided previously for uniform priorities, it is easy to see that $r-\ell\leq k(\vt,r) \leq \frac{r}{\ell}$. 

\subsection{Examples and Discussion}\label{sec:harm_examples}


\hl{When policymakers make risk allocation decisions in the real world, the resulting policies sometimes resemble the perceived harm-minimizing allocations that our theory predicts. In this Section, in addition to the already-discussed example of military draft, we present several examples of policies---allocating significant, but potentially diffuse harms---that we would struggle to explain without recourse to probability weighting. We offer these real-world examples \emph{not} to claim that policymakers do--or should--explicitly take the effects of probability weighting on people's preferences into account when making policy decisions, nor that the policies we detail here are, empirically, the most common ways policymakers allocate risk. But we offer them here as suggestive evidence that probability weighting may play a role---even an implicit one---in 
leading policymakers or the public to perceive certain outcomes as more attractive in real-world settings, and is therefore important to consider.}

\hl{
The concrete examples that we review here are unmistakably unpleasant to think about, and the reader may find it morbid to dwell on them. Yet the distastefulness of these contexts is precisely the reason why we suggest that probability weighting may play a role in them. The fact that these decisions involve such extreme harms may explain why decision-makers might rely, implicitly, on psychological distortions that give the impression that the chosen allocation has helped to reduce the total amount of harm.
}

\xhdr{Environmental pollutants} 
A recurring setting in which the public is confronted with uncertain risks of harm is in the effect of environmental pollutants on people's health. We draw on this setting as a stylized example for probability weighting by adapting a hypothetical scenario from Robert Sapolsky on the diffusion of risk \cite{sapolsky1994measures}. In his scenario, Sapolsky asks us to consider the psychological impact of dangers from pollution concentrated in a small area versus pollution that is spread more diffusely over a large area. We will see that probability weighting provides a potential way of thinking about some of the different reactions we have to contrasting scenarios. 

In particular, consider the following three hypothetical scenarios involving harms from pollution. In Scenario 1, the public learns that a company has decided to unsafely dispose of a certain amount of toxic waste by burying it on a remote plot of land next to a house where 3 people live far from the rest of the population; the concentration of the chemicals is so high that each of these three people acquires a rare, deadly disease that has a typical prevalence in the population of essentially 0. In Scenario 2, we learn that a company has decided to unsafely dispose of this same amount of toxic waste by spreading it diffusely over a larger region where a community of people live, in the process increasing the disease risk by some intermediate level such that three people will die of the same rare disease in expectation. In Scenario 3, we learn that a company has, again, disposed of this same amount of toxic waste by venting it into the air over a metropolitan area where three million people live, giving each of them a .0001\% chance of acquiring this rare disease. (Again, we suppose this chance is significantly higher than the base-rate of this disease in the population, which is effectively zero.) 

We begin by positing that the public might well have different subjective reactions to these stories if they were presented in isolation. At a minimum, the three stories have very different structures. The first evokes the image of a specific set of three people who have died as a direct result of the company's actions; they are the {\em identifiable victims} in the story \citep{jenni1997explaining}. The second evokes the image of the community that is involved, and the people who are now at elevated risk of the disease. And in the third, the company's actions have had an impact on millions of people, all of whom have been endangered by its actions. 

These different subjective framings of the scenarios should be contrasted with the fact that, in each case, the company's recklessness leads to three deaths from the same disease in expectation. This suggests that something other than the expected number of deaths is leading to the subjective differences. Moreover, the differences are also hard to fit in any simple way into an equity principle, since in all three scenarios, most people in society are completely unaffected by the harm from this particular pollution incident, and so the contrasts are more about the number of people affected and how seriously than about any sense in which all of society is sharing the harm equally. 

Probability weighting provides a framework for thinking about the contrasts. If $k$ people are impacted by the pollution, each receiving a probability of $r/k$ of contracting the resulting disease, then the total perceived harm from this allocation of risk, under the model in this section, is $k w(r/k)$. Even though the total harm $r$ is fixed in the three scenarios, the total perceived harm $k w(r/k)$ changes as we vary $k$, and so under probability weighting, the total perceived harm will appear different across the scenarios. Moreover, Theorem~\ref{thm:min_harm} shows that for some choices of these parameters, it is possible to view the intermediate scenario as producing the lowest perceived harm---when the harm is diffuse enough that we neither have identifiable victims (as in Scenario 1), nor that we produce positive risk for too large a population (as in Scenario 3). 

Beyond the specifics of the formal model, however, our purpose in this example is also to highlight a broader qualitative point rooted in the contrasts drawn by Sapolsky's example \citep{sapolsky1994measures}: that probability weighting can lead us to have very different reactions to a fixed amount of expected harm, and that our reactions can be non-monotone in the sense that harm to an intermediate-sized group can, in principle, generate less discomfort than harm to a group that is either too small or too large.

\jkdelete{Consider an \hhcomment{hypothetical? stylized?} \sbcomment{Ah---yes, we should add hypothetical. We shoudl also maybe give some justification for using a hypothetical when the point is to offer real-world examples. Would it suffice to say that this is an example contemplated in the existing literature and so we're extending it slightly? Or should we say that this example feels quite realistic?} example that we adapt from \citep{sapolsky1994measures}: a villainous industrialist, contemplating a profitable venture that will generate toxic waste as a byproduct of the manufacturing process for his widgets, needs to decide how he will dispose of the dangerous material. Confronted with such a decision, the industrialist can choose to spread the risk of exposure diffusely across the local population, perhaps dumping the waste in the nearby waterway. Alternatively, he might concentrate the risk by burying the waste in one discrete location. Or he might place the waste in a set of smaller bins that he then distributes across a few different sites. In all three cases, the total amount of harm due to exposure remains roughly the same, but each choice would distribute the risk of harm differently.

Given that the same number of people are harmed in expectation across all three choices, the industrialist should have no particular preferences between his options. In \citeauthor{sapolsky1994measures}'s version of the story, the industrialist is told that ``the toxins his factory will dump into the drinking water will probably lead to three cancer deaths in his town of 100,000.'' But rather than contemplating the harm that his plans would cause to three specific people, the industrialist re-frames his choices in terms of probabilities: ``All it does is increase the cancer risk .003 percent for each person.'' \citeauthor{sapolsky1994measures} focuses on how probabilistic thinking can insulate the industrialist from having to grapple with concrete harms, but it also points toward a behavioral bias captured by our model. Spreading a risk diffusely can help to reduce the \textit{perceived} harm, even when the harm remains constant. On \citeauthor{sapolsky1994measures}'s account, the more diffusely the industrialist is able to spread the risk, the less guilt he will feel. But spreading the risk too broadly could also engender a more violent reaction from the local population than the reaction to a more intermediate distribution. At the extreme, like dumping the waste in the local waterway, the entire population might be put at some small risk, upsetting \textit{everyone} who now faces a non-zero chance of harm.

It is easy to imagine others confronting similar dilemmas, even those with a greater sense of responsibility than the industrialist in \citeauthor{sapolsky1994measures}'s story. How should they go about making their decision?

\hhdelete{
The first step is to 

 is to choose a value of $k$ and spread the risk over $k$ people. Next, we discuss some of the alternatives available. 
 
\begin{enumerate}
\item Choose $k = n$: They could spread the toxic waste very thinly over the entire country. If caught, the public would say, ``You put the entire country at risk.'' They'd argue, ``But think how small the risk to each of you was.''
\item Choose $k = r$: They could find a house where $k$ people live and bury it directly next to them. All $k$ people die, but no one else is exposed to the risk. If caught, the public would say, ``You killed these $k$ people.'' They'd argue, ``But at least the rest of the public was safe.''
\item Choose an intermediate value of $k$ between $r$ and $n$: They choose a medium-sized town and dump it in an adjacent body of water. $r$ extra people in the town die, but the rest of the country isn't exposed to risk. 
\end{enumerate} 
Intuitively, a company caught in situation 3 might well be vilified less than a company caught in situation 1 or (certainly) in situation 2.  (We see companies caught in situation 3 on a regular basis \hh{citations?}.)  

How can we describe our intuitive reactions to the alternatives in the above hypothetical example?
Our model offers one way of deriving from first principles the idea that the minimum perceived harm is to spread the risk over a set of ``medium'' size rather than over a set that is too large or too small.}
\hhcomment{Do we want to keep a version of the following sentence -- just to point at the specific result we are referring to?}
Theorem~\ref{thm:min_harm} suggests that from a probability weighting perspective, one prefers solutions where the size of the at-risk subgroup, $k$, lies somewhere between the two extremes of $r$ and $n$. 
\sbcomment{I definitely think it makes sense to draw directly on the theorem to help answer this question. I wonder if it's actually enough to just add that one sentence to the end of the previous paragraph and be done with the entire example.}
}

\xhdr{The executioner's conscience.}
Governments that enact capital punishment have sometimes done so with significant attention to the moral guilt felt by executioners. (There is, of course, no small degree of dark irony in a policy aimed at alleviating the pain of killing somebody.) Military firing squads traditionally involved at least one rifle carrying a blank cartridge---called the ``conscience round''---distributed at random, so that all members of the squad could avoid knowing with certainty that their own actions directly led to the death of the victim \citep{sapolsky1994measures,brennan2019strange}. Some executions by lethal injection have carried forward a similar practice. Lethal injection machinery used for a time in some states involved dual sets of syringes and dual switches to activate them, which were to be thrown simultaneously by two people. A computer would randomize which of the two vials would be injected into the prisoner and which would be discarded, and would then erase the determination---therefore giving both executioners a means of avoiding the certainty of knowing whether they, individually, had delivered the injection to the prisoner \citep{sapolsky1994measures}.

While acknowledging the perversity of a policy that focuses on the pain an execution causes to an executioner, we detail this strategy because it closely fits the situation we model, as follows. In a dual execution team, the ``harm''---again, here construed as the moral guilt of having killed a person---is distributed according to probabilities $(0.5,0.5,0,0,...,0)$ (as predicted in Theorem~\ref{thm:min-1}). Why? What is the advantage of orchestrating the execution in such a way, which certainly is more complicated and costly than simply omitting the placebo? With a single executioner, the operator knows without doubt that they performed the execution, which presumably causes them psychic harm. But with dual teams---the solution we know to be optimal according to Theorem~\ref{thm:min-1}---we distribute this harm over \emph{two} people. Both executioners are spared the moral weight of certainty---after all, there's only a 50\% chance each was the killer. And in fact, if we were to randomize the harm over \emph{more} people, we would end up with a sub-optimal allocation from a perceived harm perspective (even putting aside the monetary costs of additional redundancy in the system). As we know from probability weighting, three people bearing a 33\% chance of harm is perceived to be more harmful than two people each bearing 50\%.

This situation is also interesting because, in many real-world instances of harm allocation, probabilities are assigned---but then at some future point, the uncertainty is resolved and it becomes clear who actually experienced the harm. In contrast, in this case, the uncertainty is (deliberately) never resolved---by design, the computer discards the determination. The cost that agent $i$ experiences even after the event is the weighted probability $w(p_i)$ that they caused the death. Therefore, the total perceived harm remains at $w(p_1) + w(p_2) + ... + w(p_n)$ both before and after the deed. This perceived harm is precisely the objective function we study.

\section{Perceived Benefit Maximizing Allocations}\label{sec:benefit_analysis}

Next, we analyze the case of allocating $r$ units of \emph{benefit} among $n$ individuals, and characterize the perceived benefit maximizing allocations.
We will show that the number of individuals who receive a non-zero probability of benefit is always $n$ regardless of the size of benefit, $r$. In addition and perhaps surprisingly, for a fixed value of $n$, if the total benefit level $r$ is sufficiently large relative to $n$, the optimal solution allocates probabilities in a two-tier manner where a subset of individuals receive the benefit with certainty and the remaining individuals have equal (but less than 1) chances of obtaining the remaining benefit.

Our analysis utilizes the following two lemmas. All omitted proofs can be found in the technical appendix.
\begin{lemma}\label{lem:max_concave}
Let $f: [0,\ell] \longrightarrow [0,1]$ be any strictly concave function, $m \geq 2$ an integer, and $0< c \leq m \ell$ a constant. Then $(c/m, \cdots, c/m)$ is the unique optimal solution to the following maximization problem:
\begin{equation}\label{eq:max_concave}
\max_{x_1,\cdots,x_m \in [0,\ell]} \sum_{i=1}^m f(x_i) \text{ s.t. } \sum_{i=1}^m x_i = c.
\end{equation}
\end{lemma}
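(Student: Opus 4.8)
The plan is to prove this by a direct application of Jensen's inequality, exploiting strict concavity to get both optimality and uniqueness in one stroke. First I would observe that the feasible set $\{(x_1,\dots,x_m)\in[0,\ell]^m : \sum_i x_i = c\}$ is nonempty (it contains $(c/m,\dots,c/m)$, since $0<c\le m\ell$ forces $c/m\in(0,\ell]$), compact, and $f$ is continuous, so a maximizer exists. Then for any feasible $(x_1,\dots,x_m)$, strict concavity of $f$ gives
\begin{equation}
\frac{1}{m}\sum_{i=1}^m f(x_i) \;\le\; f\!\left(\frac{1}{m}\sum_{i=1}^m x_i\right) \;=\; f(c/m),
\end{equation}
i.e.\ $\sum_{i=1}^m f(x_i)\le m\,f(c/m)$, with equality in the concavity step if and only if $x_1=\cdots=x_m$. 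Since the uniform vector is feasible and attains the bound $m\,f(c/m)$, it is optimal, and the equality condition shows it is the \emph{unique} optimum.

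Alternatively, and in keeping with the style of the proofs of Lemmas~\ref{lem:min_concave} and \ref{lem:min_convex}, I could argue by a local averaging (swap) operation: suppose $\vx^\ast$ is optimal but contains two unequal entries, say $x^\ast_1\neq x^\ast_2$. Replacing both by their common average $(x^\ast_1+x^\ast_2)/2$ keeps the sum fixed and stays within $[0,\ell]$ (an average of two points in $[0,\ell]$ lies in $[0,\ell]$), so the new vector is feasible; and strict concavity gives $f(x^\ast_1)+f(x^\ast_2) < 2 f\big((x^\ast_1+x^\ast_2)/2\big)$, strictly increasing the objective and contradicting optimality. Hence all entries of $\vx^\ast$ are equal, and the constraint $\sum_i x^\ast_i = c$ forces $\vx^\ast=(c/m,\dots,c/m)$.

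I do not expect any genuine obstacle here: the only points requiring (minor) care are checking that $c/m$ indeed lies in the admissible interval $[0,\ell]$ so that the uniform allocation is feasible, and invoking \emph{strict} (not just weak) concavity to upgrade ``an optimal solution'' to ``the unique optimal solution.'' This lemma is the concave counterpart of Lemma~\ref{lem:min_convex}, and its proof is essentially the mirror image; I would present the Jensen argument as the main proof for brevity.
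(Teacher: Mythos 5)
Your proposal is correct. The paper's own proof of this lemma is a one-line reduction: it declares the argument ``identical to Lemma~\ref{lem:min_convex}'' with maximization in place of minimization and the concavity inequality reversed --- that is, exactly the pairwise-averaging (swap) argument you sketch as your alternative: if an optimum had two unequal coordinates, replacing both by their common average stays feasible and strictly increases the objective by strict concavity. Your primary route via Jensen's inequality is a genuinely different presentation and arguably cleaner: it bounds $\sum_i f(x_i)$ by $m\,f(c/m)$ for \emph{every} feasible point in one step, so the uniform allocation is exhibited directly as the maximizer and the equality case of strict Jensen delivers uniqueness without any appeal to the prior existence of an optimizer (the swap argument, as written in the paper, tacitly presupposes that an optimal $\vx^*$ exists, which you correctly patch via compactness and continuity). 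The trade-off is minor: Jensen for $m$ points is itself typically proved by iterating the two-point inequality, so the two arguments are close cousins, and the paper's choice keeps all four lemmas stylistically parallel. Your feasibility check that $c/m\in(0,\ell]$ follows from $0<c\le m\ell$ is the same small point the paper needs and is handled correctly.
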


\begin{lemma}\label{lem:max_convex}
Let $f: [\ell,1] \longrightarrow [0,1]$ be any strictly convex function, and $m \ell < c \leq m$ a constant. Let $\vx^* = (\vx^*_1,\cdots,\vx^*_m)$ specify the unique optimal solution to  the following maximization problem:
\begin{equation}\label{eq:max_convex}
\max_{x_1,\cdots,x_m \in [\ell,1]} \sum_{i=1}^m f(x_i) \text{ s.t. } \sum_{i=1}^m x_i = c.
\end{equation}
Then for at most one $i \in \{1, \cdots, m\}$, $x^*_i \in (\ell,1)$.
\end{lemma}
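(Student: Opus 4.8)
The plan is to run the exchange argument behind Lemma~\ref{lem:min_concave} in reverse: instead of merging two interior values, I push two interior values apart toward the endpoints of $[\ell,1]$, and strict convexity guarantees that this strictly \emph{increases} the objective. Since $f$ is continuous and the feasible set $\{\vx\in[\ell,1]^m:\sum_i x_i=c\}$ is nonempty (because $m\ell\le c\le m$) and compact, an optimizer $\vx^*$ exists, so it suffices to establish the claimed structure for an arbitrary optimal $\vx^*$.

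Suppose, for contradiction, that two coordinates of $\vx^*$ lie in the open interval $(\ell,1)$; relabel so that $\ell < x^*_1 \le x^*_2 < 1$. Let $\delta = \min\{x^*_1-\ell,\;1-x^*_2\}>0$ and let $\vx'$ agree with $\vx^*$ except that $x'_1 = x^*_1-\delta$ and $x'_2 = x^*_2+\delta$. Then $\vx'$ is feasible: $x'_1\ge\ell$, $x'_2\le1$, and $x'_1+x'_2 = x^*_1+x^*_2$, so the sum constraint is preserved while the other coordinates (hence their $f$-values) are untouched. Writing $a = x^*_1-\delta$ and $b = x^*_2+\delta$, we have $a < x^*_1 \le x^*_2 < b$ and $a+b = x^*_1+x^*_2$, so $\lambda := \frac{b-x^*_1}{b-a}\in(0,1)$ satisfies $x^*_1 = \lambda a+(1-\lambda)b$ and $x^*_2 = (1-\lambda)a+\lambda b$. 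Applying strict convexity of $f$ to each of these two convex combinations and summing gives $f(x^*_1)+f(x^*_2) < [\lambda+(1-\lambda)]f(a)+[(1-\lambda)+\lambda]f(b) = f(a)+f(b)$, so $\vx'$ strictly increases the objective of (\ref{eq:max_convex}), contradicting optimality of $\vx^*$. Hence at most one coordinate of $\vx^*$ lies in $(\ell,1)$.

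I do not expect a genuine obstacle here; the only care needed is bookkeeping — checking $\delta>0$ and that $\vx'$ stays in $[\ell,1]^m$ (these facts are used only for feasibility, not for the strict improvement), and noting the case $x^*_1=x^*_2$ is handled verbatim by the same $\lambda$-argument. Conceptually this is just the statement that a separable \emph{strictly} convex objective on the transportation polytope $\{x_i\in[\ell,1],\ \sum_i x_i=c\}$ is maximized only at a vertex, and the vertices of that polytope have all but at most one coordinate pinned to a bound; I would mention this as the intuition but present the elementary exchange argument for self-containedness. Finally, for the uniqueness clause, I would use the structure just proved to write any optimal solution as some number $j$ of coordinates equal to $1$, one coordinate equal to a value $\gamma\in[\ell,1]$ fixed by $j+\gamma+(m-j-1)\ell=c$, and the remaining $m-j-1$ coordinates equal to $\ell$; comparing the resulting values $jf(1)+f(\gamma)+(m-j-1)f(\ell)$ over the admissible integers $j$ and invoking strict convexity once more rules out genuine ties and pins down the optimal multiset uniquely.
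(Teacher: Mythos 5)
Your argument is correct and is essentially the paper's own proof: the paper establishes this lemma as the mirror image of the exchange argument in Lemma~\ref{lem:min_concave} (push two coordinates lying in the open interval apart until one reaches an endpoint of $[\ell,1]$; strict convexity then yields a strict increase in the objective), which is precisely your $\delta$-perturbation with $\delta=\min\{x^*_1-\ell,\;1-x^*_2\}$. One small caveat: your closing remark on uniqueness is not needed for the stated claim (the lemma presupposes a unique optimizer, as does the paper), and the assertion that strict convexity by itself rules out ties among the candidate vertex configurations indexed by $j$ is not actually substantiated---but this does not affect the validity of the main exchange argument.
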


Armed with the above two lemmas, we can characterize the perceived benefit maximizing allocation of $r$ units of benefit among $n$ individuals.

\begin{theorem}\label{thm:max_benefit}
Consider the distribution of $r$ units of benefit among $n$ individuals. Let $\ell$ specify the inflection point of the probability weighting function, $w(.)$. Let $\vp^*$ be a maximizer of (\ref{opt:welfare}). Then
\begin{enumerate}
\item For at most one individual $i$, $\ell < p^*_i < 1$;
\item For all $j \in S$ with $0 \leq p^*_j \leq \ell$, $p^*_j = p$ where $0 \leq p \leq \ell$ is a constant.
\end{enumerate}
\end{theorem}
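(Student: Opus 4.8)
The plan is to mirror the proof of Theorem~\ref{thm:min_harm}, swapping the roles of the concave region $[0,\ell]$ and the convex region $[\ell,1]$ and invoking the maximization lemmas (Lemmas~\ref{lem:max_concave} and~\ref{lem:max_convex}) in place of the minimization ones. Both parts follow from a local exchange argument: assuming the conclusion fails, I isolate two offending coordinates of $\vp^*$, redistribute probability mass between them while holding their sum (and every other coordinate) fixed, and show the objective $\sum_{i \in S} w(p_i)$ strictly increases, contradicting optimality. Since only two coordinates move and their sum is preserved, the perturbed vector still satisfies $\sum_{i} p_i = r$, so it is a legitimate competitor in~(\ref{opt:welfare}).

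For part~1, suppose two individuals, say $1$ and $2$, have $\ell < p^*_1, p^*_2 < 1$. Put $c = p^*_1 + p^*_2$, so $2\ell < c < 2$, and apply Lemma~\ref{lem:max_convex} to $w$ restricted to $[\ell,1]$ (strictly convex by the inverse-S assumption) with $m=2$: its unique maximizer $(p'_1,p'_2)$ of $w(p_1)+w(p_2)$ subject to $p_1+p_2=c$, $p_1,p_2\in[\ell,1]$, has at most one coordinate in the open interval $(\ell,1)$. Since $(p^*_1,p^*_2)$ has both coordinates strictly interior, it is not this maximizer, so replacing it by $(p'_1,p'_2)$ strictly increases the objective---a contradiction. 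For part~2, suppose $1$ and $2$ have $0 \le p^*_1, p^*_2 \le \ell$ with $p^*_1 \ne p^*_2$. Put $c = p^*_1 + p^*_2$; since the two values are unequal they are not both zero, so $0 < c \le 2\ell$, and Lemma~\ref{lem:max_concave} applied to $w$ on $[0,\ell]$ with $m=2$ identifies the equal split $(c/2, c/2)$---which lies in $[0,\ell]$---as the \emph{unique} maximizer. Hence replacing $(p^*_1, p^*_2)$ by $(c/2, c/2)$ strictly increases the objective, again a contradiction; so every coordinate of $\vp^*$ lying in $[0,\ell]$ equals a common value $p \in [0,\ell]$.

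The theorem-level argument is routine once the lemmas are available; the only points needing care are checking that the constant $c$ lies in the range each lemma requires (so strict concavity/convexity can be invoked), that the redistributed values stay within the appropriate subinterval, and that the exchange preserves the budget constraint. The real content sits inside Lemmas~\ref{lem:max_concave} and~\ref{lem:max_convex}---especially Lemma~\ref{lem:max_convex}, whose ``at most one interior coordinate'' conclusion is exactly what forces the two-tier structure described after the theorem; there is no further obstacle at the level of Theorem~\ref{thm:max_benefit} itself.
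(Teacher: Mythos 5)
Your proposal is correct and follows essentially the same route as the paper's own proof: both parts are local exchange arguments on two offending coordinates, invoking Lemma~\ref{lem:max_convex} for the convex region and Lemma~\ref{lem:max_concave} for the concave region while holding the pairwise sum (and hence the budget constraint) fixed. Your version is, if anything, slightly more careful than the paper's in verifying that the constant $c$ lies in the range each lemma requires.
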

\begin{proof}
We first show that in the perceived benefit maximizing allocation, at most one individual receives $\ell < p^*_i < 1$. Suppose not and there are at least two individuals (say 1 and 2) such that $\ell<p^*_1 \leq p^*_2 < 1$. According to Lemma~\ref{lem:max_convex}, we can improve the objective value by redistributing the probability of benefit $(p^*_1 + p^*_2)$ among individuals 1,2 as follows: we reduce $p_1$ and increase $p_2$--keeping $p_1 + p_2$ constant at $(p^*_1 + p^*_2)$--until either $p_1$ reaches $\ell$ or $p_2$ reaches $1$. Note that this operation improves the objective value and in either case it contradicts our initial assumption that for at least two individuals $\ell <p^*_1 \leq p^*_2 < 1$. 

Second, we show that for all $j \in S$ with $0 \leq p^*_j \leq \ell$, $p^*_j = p$ where $0 \leq p \leq \ell$ is a constant. Suppose not, and there are two individuals with unequal positive probabilities in the $[0,\ell]$ interval. Without loss of generality, we assume $0\leq p^*_1 < p^*_2 \leq \ell$. According to Lemma~\ref{lem:max_concave}, we can improve the objective value by redistributing the probability of harm $(p^*_1 + p^*_2)$ among individuals 1,2 as follows: set both $p_1 = p_2 = (p^*_1 + p^*_2)/2$. Since this operation improves the objective value, it contradicts the assumption that there are two individuals with unequal probabilities in $[0,\ell].$
\end{proof}

Next and as a concrete example, we focus on the special case of the Prelec probability weighting functions with $\beta=1$, and derive the perceived benefit maximizing policy for $r=1$. According to theorem~\ref{thm:max_benefit}, it is easy to observe the following:

\begin{corollary}\label{cor:max_1}
Consider the distribution $r=1$ units of benefit among $n$ individuals. Suppose $w(.)$ is the Prelec function with $\beta=1$. For any $\alpha<1$ and $n>1$, the unique maximizer of (\ref{opt:welfare}) is either
\begin{itemize}
    \item the uniform allocation $(1/n,\cdots,1/n)$;
    \item or $(\epsilon,...,\epsilon, 1-(n-1)\epsilon)$ for some $\epsilon \leq \min\{1/e, \frac{1-\ell}{n-1}\}$.
\end{itemize}
\end{corollary}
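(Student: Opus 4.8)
The plan is to obtain the corollary as an immediate consequence of Theorem~\ref{thm:max_benefit} together with a short case analysis driven only by the budget constraint. First I would record two facts. (a) A maximizer of (\ref{opt:welfare}) exists, since the feasible set $\{\vp \in [0,1]^n : \sum_{i} p_i = 1\}$ is compact and $\vp \mapsto \sum_i w(p_i)$ is continuous. (b) For the Prelec function with $\beta = 1$ the inflection point is $\ell = 1/e$, and $w$ is (strictly) concave on $[0,\ell]$ and (strictly) convex on $[\ell,1]$. By Theorem~\ref{thm:max_benefit}, after relabeling coordinates, any maximizer $\vp^*$ decomposes as: $a \ge 0$ coordinates equal to $1$; at most one coordinate lying in the open interval $(\ell,1)$; and all remaining coordinates equal to a common value $p \in [0,\ell]$.

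I would then split into three exhaustive cases using $\sum_i p^*_i = r = 1$. In case (i), $a \ge 1$: since $1+1 = 2 > 1$ forces $a \le 1$, we have $a = 1$, and the single coordinate equal to $1$ already exhausts the budget, so all other coordinates are $0$; hence $\vp^* = (1,0,\ldots,0)$ up to permutation, which is the second listed form with $\epsilon = 0$. In case (ii), $a = 0$ and exactly one coordinate $p^*_{i_0}$ lies in $(\ell,1)$: the other $n-1$ coordinates share a value $\epsilon := p \in [0,\ell]$, feasibility gives $p^*_{i_0} = 1 - (n-1)\epsilon$, and the requirements $p^*_{i_0} < 1$ and $p^*_{i_0} > \ell$ become $\epsilon > 0$ and $\epsilon < \frac{1-\ell}{n-1}$; combined with $\epsilon \le \ell = 1/e$ this yields $\epsilon \le \min\{1/e, \frac{1-\ell}{n-1}\}$, exactly the second listed form. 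In case (iii), $a = 0$ and no coordinate lies in $(\ell,1)$: then every coordinate lies in $[0,\ell]$, so by the second part of Theorem~\ref{thm:max_benefit} they are all equal, giving $\vp^* = (1/n,\ldots,1/n)$; this is feasible precisely when $1/n \le \ell = 1/e$, i.e.\ $n \ge 3$ (so for $n = 2$ this case is vacuous and the answer is always the second form). This establishes the dichotomy.

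For the ``unique maximizer'' claim, the uniform allocation is a single point, and in case (ii) the only degree of freedom is the scalar $\epsilon$, so the objective restricted to that family is $(n-1)w(\epsilon) + w(1-(n-1)\epsilon)$; here I would invoke Lemmas~\ref{lem:max_concave} and~\ref{lem:max_convex} on the concave piece $[0,\ell]$ and the convex piece $[\ell,1]$ of $w$ to conclude that any optimal configuration of the prescribed shape is pinned down, hence the maximizer is unique up to relabeling.

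I do not expect a real obstacle: this is a corollary of Theorem~\ref{thm:max_benefit} and the paper flags it as ``easy to observe.'' The only things needing care are the boundary bookkeeping — that $n=2$ falls under case (ii) because $1/n = 1/2 > \ell$, that the strict inequality $\epsilon < \frac{1-\ell}{n-1}$ obtained from $p^*_{i_0} > \ell$ is consistent with the (weaker) $\le$ stated in the corollary, and that spurious shapes such as two coordinates at $\ell$ plus a smaller positive coordinate are already excluded by the second part of Theorem~\ref{thm:max_benefit}.
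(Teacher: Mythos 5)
Your proof is correct and takes essentially the same route the paper intends: the paper gives no explicit argument, asserting only that the corollary is ``easy to observe'' from Theorem~\ref{thm:max_benefit}, and your case analysis on how many coordinates equal $1$, lie in $(\ell,1)$, or lie in $[0,\ell]$ supplies exactly the missing bookkeeping (including the correct derivation of the bound $\epsilon \leq \min\{1/e, \frac{1-\ell}{n-1}\}$ and the $n=2$ boundary case). The only soft spot is the word ``unique'': your appeal to Lemmas~\ref{lem:max_concave} and~\ref{lem:max_convex} pins down the shape of any maximizer but does not literally exclude a tie between the two listed forms — the paper is no more careful on this point, so this is not a gap relative to its own treatment.
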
\label{thm:max-1}
%

Moreover, we can show that for sufficiently large $n$, $(1/n,\cdots,1/n)$ is the unique maximizer. 

\begin{theorem}\label{thm:no_epsilon}
Consider the distribution $r=1$ units of benefit among $n$ individuals. Suppose $w(.)$ is the Prelec function with $\beta=1$ and $\alpha<1$.
Then there exists a value $q(\alpha,\beta) \in \mathbb{N}$ such that for any $n > q(\alpha,\beta)$, $(1/n,\cdots,1/n)$ is the unique global maximizer of (\ref{opt:welfare}).
\end{theorem}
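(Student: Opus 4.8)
The plan is to use Corollary~\ref{cor:max_1} to reduce the claim to a single scalar comparison, and then settle that comparison with an asymptotic estimate. By the corollary (recalling that $\ell = 1/e$ when $\beta = 1$), the unique maximizer of (\ref{opt:welfare}) with $r = 1$ is either the uniform allocation $(1/n,\dots,1/n)$ or a two-tier allocation $(\epsilon,\dots,\epsilon,1-(n-1)\epsilon)$ for some $\epsilon$ with $0 \le \epsilon \le \bar\epsilon := \min\{1/e,\,(1-1/e)/(n-1)\}$. Since $(1-1/e)/(n-1) < 1/n$ for every $n \ge 3$, we have $\bar\epsilon = (1-1/e)/(n-1) < 1/n$, so no allocation of the two-tier form coincides with the uniform one. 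Hence it suffices to show that, for all sufficiently large $n$,
\begin{equation*}
n\,w(1/n) \;>\; (n-1)\,w(\epsilon) + w\bigl(1-(n-1)\epsilon\bigr) \qquad\text{for every } \epsilon \in [0,\bar\epsilon];
\end{equation*}
given this, the maximizer---being confined to the candidate set described by the corollary---must be the uniform allocation, and it is the unique one.

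To handle the right-hand side uniformly in $\epsilon$, I will use monotonicity of $w$ to bound $w(\epsilon) \le w(\bar\epsilon)$ together with the range bound $w\bigl(1-(n-1)\epsilon\bigr) \le 1$, so that it is enough to prove $n\,w(1/n) - (n-1)\,w(\bar\epsilon) > 1$ for large $n$; I will in fact show this difference tends to $+\infty$. Writing the Prelec function as $w(p) = e^{-(-\ln p)^\alpha}$ and setting $a := \ln\frac{e}{e-1} > 0$, we get $n\,w(1/n) = e^{\ln n - (\ln n)^\alpha}$ and $(n-1)\,w(\bar\epsilon) = e^{\ln(n-1) - (\ln(n-1)+a)^\alpha}$, whence
\begin{equation*}
n\,w(1/n) - (n-1)\,w(\bar\epsilon) \;=\; e^{\ln n - (\ln n)^\alpha}\Bigl(1 - e^{-\Delta_1 - \Delta_2}\Bigr),
\end{equation*}
with $\Delta_1 := \ln\frac{n}{n-1}$ and $\Delta_2 := (\ln(n-1)+a)^\alpha - (\ln n)^\alpha$. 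Because $\ln(n-1)+a - \ln n = a - \Delta_1 \to a > 0$, for large $n$ both $\Delta_1$ and $\Delta_2$ are positive and tend to $0$, and a mean-value estimate for $t \mapsto t^\alpha$ on the interval $[\ln n,\,\ln(n-1)+a] \subseteq [\ln n,\,2\ln n]$ gives $\Delta_2 \ge c_0 (\ln n)^{\alpha-1}$ for some constant $c_0 > 0$. Hence, using $1 - e^{-x} \ge x/2$ for $x \in [0,1]$, we get $1 - e^{-\Delta_1 - \Delta_2} \ge \tfrac12(\Delta_1 + \Delta_2) \ge \tfrac12 c_0 (\ln n)^{\alpha-1}$ once $n$ is large.

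Putting the pieces together, $n\,w(1/n) - (n-1)\,w(\bar\epsilon) \ge \tfrac12 c_0\,\exp\bigl(\ln n - (\ln n)^\alpha + (\alpha-1)\ln\ln n\bigr)$, and the exponent tends to $+\infty$ because, as $\alpha < 1$, $\ln n$ dominates both $(\ln n)^\alpha$ and $\ln\ln n$. Therefore $n\,w(1/n) - (n-1)\,w(\bar\epsilon) \to +\infty$, so it exceeds $1$ for all $n$ beyond some threshold, which we take to be $q(\alpha,\beta)$; this is exactly the required inequality. The main obstacle is the estimate in the middle paragraph: $n\,w(1/n)$ and $(n-1)\,w(\bar\epsilon)$ have the same exponential order of growth, so the argument hinges on cleanly isolating the lower-order multiplicative gap $1 - e^{-\Delta_1 - \Delta_2} \asymp (\ln n)^{\alpha-1}$ and verifying that the shared prefactor $n\,e^{-(\ln n)^\alpha}$ still grows fast enough to overwhelm it; everything else is routine bookkeeping (and the use of $\alpha<1$ is essential, since at $\alpha=1$ the weighting is linear and every feasible allocation is optimal).
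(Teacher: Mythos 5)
Your proof is correct, but it reaches the key estimate by a genuinely different route than the paper. Both arguments share the same top-level structure---use the structural characterization (Corollary~\ref{cor:max_1}) to reduce the problem to ruling out the two-tier candidates $(\epsilon,\dots,\epsilon,1-(n-1)\epsilon)$---but they rule them out differently. The paper compares each two-tier allocation against $(\tfrac{1}{n-1},\dots,\tfrac{1}{n-1},0)$ and shows strict improvement using only two qualitative facts: $w(p)<p$ for $p$ above the fixed point $1/e$, and $w'(p)>1$ for $p$ below the point $q$ where $w'(q)=1$, so that $w(\tfrac{1}{n-1})-w(\tfrac{\epsilon}{n-1})>\tfrac{1-\epsilon}{n-1}$ by integrating the derivative. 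This yields an explicit threshold (any $n$ with $\tfrac{1}{n-1}<q$) and would survive essentially unchanged for any inverse-S weighting with $w'(0^+)=+\infty$. Your argument instead compares the two-tier value directly against the uniform value $n\,w(1/n)$, after the crude but valid bounds $w(\epsilon)\le w(\bar\epsilon)$ and $w(1-(n-1)\epsilon)\le 1$, and settles the resulting scalar inequality by an asymptotic expansion that leans on the explicit Prelec formula. The delicate step---isolating the multiplicative gap $1-e^{-\Delta_1-\Delta_2}\asymp(\ln n)^{\alpha-1}$ against the common prefactor $e^{\ln n-(\ln n)^{\alpha}}$---is handled correctly, and you are right that the $\Delta_2$ term is essential: $\Delta_1\asymp 1/n$ alone would give only $e^{-(\ln n)^{\alpha}}\to 0$ and the argument would collapse. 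The trade-off is that the paper's proof is shorter, non-asymptotic, and robust to the choice of $w$, while yours is more computational and Prelec-specific but quantifies the gap (showing it tends to $+\infty$), which the paper's argument does not reveal.
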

\begin{proof}
Note that since $w$ is concave up to the inflection point $\frac{1}{e}$, its derivative is positive and decreasing in $p$. Also we know that the derivative amount to $+\infty$ at $p=0$, so there exists a value $0 < q < \frac{1}{e}$ at which $w'(q) = 1$.
We can choose $n$ large enough such that $\frac{1}{n-1} < q$. We next show that for such values of $n$, no distribution of the form $\left(\frac{\epsilon}{n-1}, \cdots, \frac{\epsilon}{n-1}, 1-\epsilon \right)$ with $\epsilon < 1-\frac{1}{e}$ can be optimal.

The proof is by contradiction. Suppose not and $\left(\frac{\epsilon}{n-1}, \cdots, \frac{\epsilon}{n-1}, 1-\epsilon \right)$ is a maximizer of (\ref{opt:welfare}) for some $\epsilon < 1-\frac{1}{e}$. Next, we show that we can strictly improve the objective by moving from this distribution to $(\frac{1}{n-1}, \cdots, \frac{1}{n-1}, 0)$. This is a contradiction with the assumption that $\left(\frac{\epsilon}{n-1}, \cdots, \frac{\epsilon}{n-1}, 1-\epsilon \right)$ is a maximizer.
To establish the above claim, we can write:
{\scriptsize
\begin{eqnarray*}
\sigma\left(\frac{\epsilon}{n-1}, \cdots, \frac{\epsilon}{n-1}, 1-\epsilon \right) &=& w(1-\epsilon) +  (n-1) w\left(\frac{\epsilon}{n-1}\right) \\
&\leq & 1-\epsilon +  (n-1) w\left(\frac{\epsilon}{n-1}\right) \\
&\leq & (n-1) \left[ \frac{1-\epsilon}{n-1} +  w\left(\frac{\epsilon}{n-1}\right) \right] \\
&\leq & (n-1) \left[ w\left(\frac{1}{n-1}\right) - w\left(\frac{\epsilon}{n-1}\right) +  w\left(\frac{\epsilon}{n-1}\right) \right] \\
&\leq & (n-1) w\left(\frac{1}{n-1}\right)
\end{eqnarray*}
}
In the second line of the above derivation, we utilized the fact that $1-\epsilon > \frac{1}{e}$--the fixed point and point of inflection of $w(.)$. So $w(1-\epsilon) < 1-\epsilon$. 
In the third line, we utilized the fact that $w\left(\frac{1}{n-1}\right) - w\left(\frac{\epsilon}{n-1}\right) > \frac{1-\epsilon}{n-1}$. To see this, note that
{\scriptsize
\begin{equation*}
w\left(\frac{1}{n-1}\right) - w\left(\frac{\epsilon}{n-1}\right) =  \int_{\frac{\epsilon}{n-1}}^{\frac{1}{n-1}}  w'(p) dp 
> \int_{\frac{\epsilon}{n-1}}^{\frac{1}{n-1}}  1 dp 
= \frac{1-\epsilon}{n-1}.
\end{equation*}
}
\end{proof}

\xhdr{Conditions for two-tier optimal allocations} According to Theorem~\ref{thm:max_benefit}, when $r>1$, the perceived benefit maximizing solution can consist of individuals who receive the benefit with certainty (i.e., $p^*_i = 1$). Next we ask: how large should $r$ be relative to $n$ for at least one individual to receive the benefit with certainty in the perceived benefit maximizing allocation? The following Proposition establishes that the necessary and sufficient condition is $r = \Theta(n)$.
\begin{proposition}\label{prop:certain_benefit}
If $r \geq (n-1)\ell + 1$, there exists an individual $i \in S$ with $p^*_i = 1$. Moreover, there exists a constant $q$ such that if $r \leq q \times n $, $p^*_i < 1$ for all $i$. 
\end{proposition}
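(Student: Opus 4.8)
The plan is to treat the two halves of the statement separately: the first is essentially immediate from the structure theorem, while the ``moreover'' part needs an exchange argument.

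For the first claim I would argue the contrapositive via Theorem~\ref{thm:max_benefit}. Let $\vp^*$ be a maximizer and suppose $p^*_i < 1$ for all $i$. Then by Theorem~\ref{thm:max_benefit} at most one coordinate of $\vp^*$ lies in $(\ell,1)$, and since none equals $1$, every remaining coordinate is at most $\ell$. Summing, $r = \sum_{i\in S} p^*_i$ is strictly less than $(n-1)\ell + 1$ (the lone coordinate in $(\ell,1)$, if present, is $<1$, and $\ell<1$ handles the case with no such coordinate). Contrapositively, $r \ge (n-1)\ell+1$ forces some $p^*_i = 1$.

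For the ``moreover'' direction I would argue by contradiction: assuming a maximizer $\vp^*$ has a coordinate equal to $1$, I want a feasible perturbation that strictly increases $\sum_i w(p_i)$ whenever $r$ is small relative to $n$. The key preliminary fact is that, since $w$ overweights small probabilities ($w(p) > p$ near $0$), $w(0)=0$, and $w$ is concave on $[0,\ell]$ (so $w'$ is nonincreasing there), there is a constant $q_0 \in (0,\ell]$ with $w'(x)>1$ for all $x\in(0,q_0)$ --- otherwise $w'\le 1$ on all of $(0,\ell]$, giving $w(p)=\int_0^p w'(t)\,dt\le p$ there, contradicting overweighting. I would then take $q$ to be a small constant (of order $q_0/4$, intersected with $1/N_0$ for a threshold $N_0$ below, so that small $n$ is trivial: if $r\le qn<1$ no coordinate can equal $1$). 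For $n\ge N_0$, write $\vp^*$ via Theorem~\ref{thm:max_benefit} as $j\ge 1$ coordinates equal to $1$, at most one coordinate $s\in(\ell,1)$, and $m$ coordinates equal to a common value $p\in[0,\ell]$. From $j\le r\le qn$ we get $m=n-j-O(1)$ large, and $mp\le r\le qn$ forces $p$ (and $1/m$) small; with $q$ small and $n$ large, $p+\tfrac{1}{m}<q_0\le \ell$. Now decrease one of the $1$-coordinates to $0$ and add $\tfrac1m$ to each of the $m$ low coordinates. This keeps $\sum_i p_i=r$ and is feasible, and using $w(0)=0,\ w(1)=1$,
\[
\sigma(\vp') - \sigma(\vp^*) \;=\; m\bigl[w(p+\tfrac{1}{m}) - w(p)\bigr] - 1 \;\ge\; w'(p+\tfrac{1}{m}) - 1 \;>\; 0,
\]
where the first inequality is concavity of $w$ on $[p,p+\tfrac1m]\subseteq[0,\ell]$ and the last uses $p+\tfrac1m<q_0$. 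This contradicts optimality, so no maximizer has a coordinate equal to $1$, and $q$ is the desired constant.

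The step I expect to be the main obstacle is selecting the right perturbation in the second part. Because $w$ is typically very steep near $1$ (for the Prelec family the one-sided derivative at $1$ is $+\infty$), an \emph{infinitesimal} transfer away from the certain-benefit coordinate actually \emph{decreases} $\sum_i w(p_i)$; one genuinely has to move the whole unit of mass at once and pay it out in the concave region where $w$ has slope $>1$. Making this rigorous requires (a) extracting a fixed neighborhood $(0,q_0)$ of $0$ on which $w'>1$ purely from the qualitative inverse-$S$ assumptions, and (b) verifying that in the regime $r\le qn$ the ``low'' coordinates supplied by Theorem~\ref{thm:max_benefit} are automatically small enough to lie in that neighborhood even after adding $\tfrac1m$ to each. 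Once those two quantitative facts are in place, the remainder is bookkeeping over the $O(1)$ distinct coordinate values allowed by the structure theorem.
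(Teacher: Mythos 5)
Your argument is correct and follows essentially the same route as the paper: the first claim is the same counting argument via Theorem~\ref{thm:max_benefit}, and the second uses the same key move of defining $q$ through the point where $w'=1$ and transferring the \emph{entire} unit of mass from the certain-benefit coordinate into the region where $w'>1$. The only difference is that you spell out the quantitative bookkeeping (that the low coordinates plus the added $1/m$ stay below the threshold) which the paper defers to ``reasoning similar to the proof of Theorem~\ref{thm:no_epsilon}.''
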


Figure~\ref{fig:min_r} depicts the minimum $r$ required for at least one individual to receive the benefit with certainty as a function of $n$ when $w(.)$ is the Prelec probability weighting function with $\alpha=0.9$ and $\beta=1$.
\begin{figure}
\centering
\includegraphics[width=0.45\textwidth]{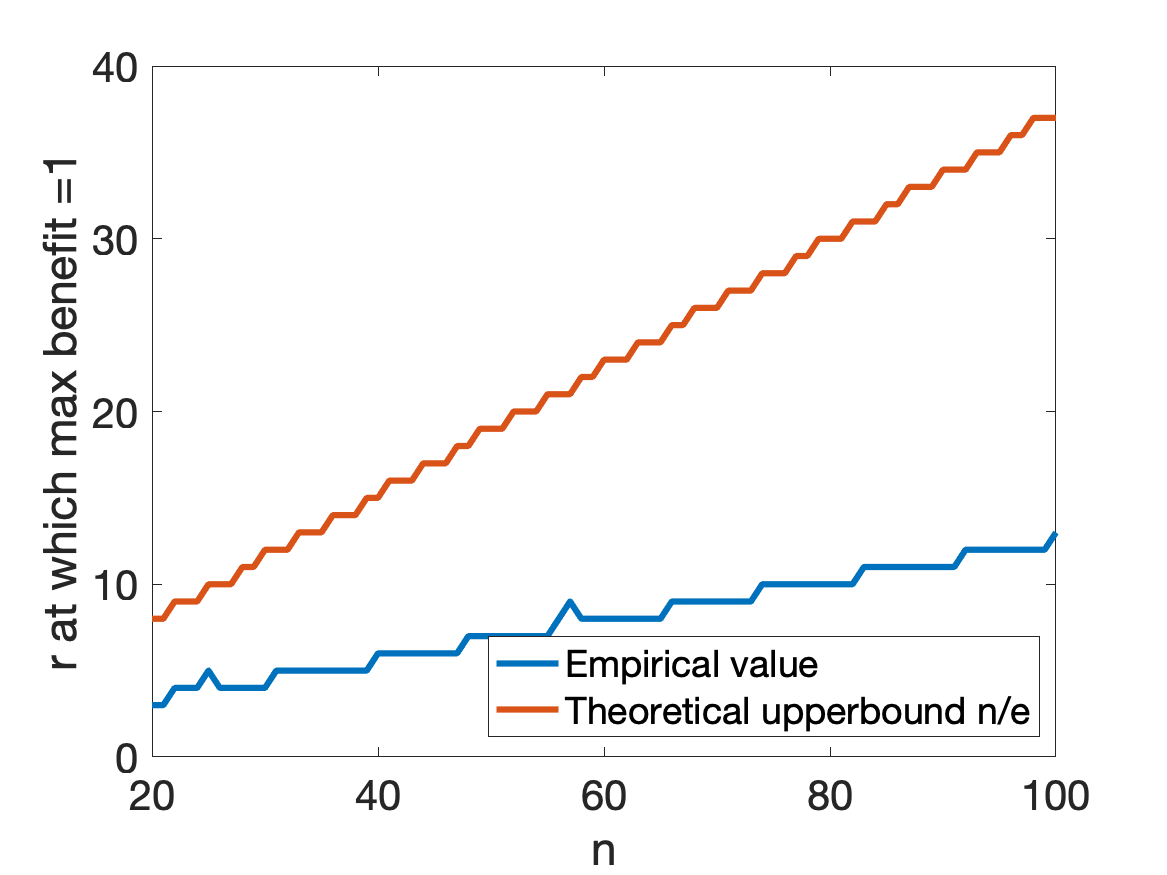}
\caption{The minimum $r$ required for at least one individual to receive the benefit with certainty when the probability weighting function is the Prelec function with $\alpha=0.9$ and $\beta=1$.}
\label{fig:min_r}
\end{figure}

\xhdr{Heterogeneous priorities} Similar to the case of allocating harms, a heterogeneous priority vector $\vt$ does not significantly impact the shape and support of the perceived benefit maximizing allocations, as long as for all $i \in S$, $t_i>0$. 
\begin{proposition}\label{prop:max_heter}
Consider the distribution of $r$ units of benefit among $n$ individuals with priorities specified by $\vt$. 
For any $\vt \ggcurly \zeros$ and any $\vp^* \in C^*(\vt)$, $\vp^*  \ggcurly \zeros$. 
\end{proposition}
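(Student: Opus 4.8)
The plan is a proof by contradiction using local transfers of probability mass. Suppose some $\vp^* \in C^*(\vt)$ had a zero coordinate, say $p^*_j = 0$. Since $\sum_{k \in S} p^*_k = r > 0$, at least one coordinate is strictly positive, and for any such ``donor'' coordinate $k$ we may consider replacing $p^*_k$ by $p^*_k - \delta$ and $p^*_j = 0$ by $\delta$ (all other coordinates unchanged) for a small $\delta \in (0, p^*_k]$; this perturbed vector still lies in $C$. The induced change in the objective $\sigma(\vt, \cdot) = \sum_k t_k w(p_k)$ is $t_j\bigl(w(\delta) - w(0)\bigr) - t_k\bigl(w(p^*_k) - w(p^*_k - \delta)\bigr) = t_j w(\delta) - t_k\bigl(w(p^*_k) - w(p^*_k - \delta)\bigr)$, using $w(0) = 0$. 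Since $t_j > 0$, the strategy is to show this quantity is positive for $\delta$ small, which contradicts the maximality of $\vp^*$.

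First I would rule out interior donors. Suppose some positive coordinate satisfies $p^*_k \in (0,1)$. Because $w'$ is continuous on $(0,1)$, $w$ is Lipschitz on a neighborhood of $p^*_k$, so $w(p^*_k) - w(p^*_k - \delta) = O(\delta)$ as $\delta \to 0^+$. On the other hand, $w$ is concave on $[0,\ell]$ with $w'(0^+) = +\infty$ (as for the Prelec function), so $w(\delta)/\delta \ge w'(\delta) \to +\infty$; hence $t_j w(\delta)$ eventually exceeds $t_k\bigl(w(p^*_k) - w(p^*_k - \delta)\bigr)$, a contradiction. Therefore every strictly positive coordinate of $\vp^*$ equals $1$, so $\vp^*$ is a $0/1$ vector; if $r$ were not an integer this already contradicts $\sum_k p^*_k = r$, and otherwise $\vp^*$ has exactly $r$ ones with $1 \le r \le n-1$.

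The crux is then to derive a contradiction from this remaining configuration. Pick a coordinate $i$ with $p^*_i = 1$ and transfer $\delta$ from $i$ to $j$; the objective changes by $t_j w(\delta) - t_i\bigl(1 - w(1-\delta)\bigr)$. For the Prelec function with $\alpha < 1$ one has $1 - w(1-\delta) = 1 - e^{-\beta(-\ln(1-\delta))^\alpha} = \Theta(\delta^\alpha)$ as $\delta \to 0$ (since $-\ln(1-\delta) \sim \delta$ and $1 - e^{-x} \sim x$), whereas $\ln\bigl(1/w(\delta)\bigr) = \beta(\ln(1/\delta))^\alpha = o\bigl(\ln(1/\delta)\bigr)$, so $w(\delta)$ decays to $0$ more slowly than $\delta^c$ for every constant $c > 0$. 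Thus $w(\delta)/\bigl(1 - w(1-\delta)\bigr) \to +\infty$, and for $\delta$ small enough the change is again positive, contradicting maximality. Hence no coordinate of $\vp^*$ is zero, i.e., $\vp^* \ggcurly \zeros$. I expect this last step to be the main obstacle: when all of the benefit mass sits on coordinates equal to $1$ — a configuration that genuinely occurs for skewed priority vectors, e.g.\ loading almost all benefit onto the highest-priority individual — the loss caused by the transfer is itself sub-linear in $\delta$, so one cannot conclude from $w'(0^+) = +\infty$ alone and must instead compare the precise endpoint decay rates of $w$. The same computation makes clear why the hypothesis $\vt \ggcurly \zeros$ cannot be dropped: if $t_j = 0$ the gain term disappears.
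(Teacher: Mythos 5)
Your proof is correct and follows the same basic strategy as the paper's: assume a zero coordinate and transfer a small amount $\delta$ of mass to it from a positive coordinate, exploiting the fact that $w$ rises superlinearly from $0$ (since $w'(0^+)=+\infty$) while the donor loses comparatively little. Where you genuinely improve on the paper is in the case where the only available donors sit at $p^*_k=1$: the paper's proof simply picks any $j$ with $0<p^*_j\le 1$ and asserts $w'(p^*_j)<+\infty$, but for the Prelec function with $\alpha<1$ one has $w'(p)\to+\infty$ as $p\to 1^-$, so the donor's loss $1-w(1-\delta)=\Theta(\delta^\alpha)$ is itself superlinear and the naive derivative comparison does not close the argument there. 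Your direct comparison of endpoint decay rates --- $w(\delta)$ decays more slowly than any power of $\delta$, hence eventually dominates $\Theta(\delta^\alpha)$ --- is exactly what is needed to patch this, so your write-up is, if anything, more complete than the one in the paper. One caveat applies to both arguments: the proposition is stated for a general inverse-S-shaped $w$, but the key facts used ($w'(0^+)=+\infty$ and the specific decay of $w$ near $0$ and $1$) are properties of the Prelec family rather than consequences of the paper's blanket assumptions on $w$.
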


\subsection{Examples and Discussion}\label{sec:benefit_examples} 

As with the case of distributions of harm, it is useful to interpret the theoretical results for distributions of benefit by asking how they reflect real-world scenarios. As in our previous discussion of examples, we emphasize again that in any real-world instance of benefit allocation, numerous factors impact the ultimate distribution, and we do not claim that probability weighting is the only factor shaping the benefit allocations in the examples that follow. Instead, we argue that the instances presented here collectively serve as inductive evidence for probability weighting playing \emph{a} possible role in how benefits are distributed in society, and of policymakers grappling with choices between distributions of uncertain benefits. 

\xhdr{Uniform lotteries} We begin by observing that when the total benefit $r$ is small relative to the size of the population under consideration ($n$), the allocation that maximizes the perceived benefit is a {\em uniform lottery} in which everybody has an equal chance of receiving the benefit. Examples of uniform lotteries are widespread; what is interesting is that our model produces this natural outcome without an overt preference for \emph{equalizing} probabilities among people. Rather, the equalization of probabilities is arising for a different reason: because probability weighting inflates small probabilities, we find that assigning everyone a small but equal probability of receiving the benefit serves to maximize the sum of weighted probabilities. 
In this way, an objective that maximizes total perceived welfare produces an outcome that indirectly optimizes for a certain type of equity in the probabilities as well. 

Note also that this forms a strong contrast with the case of harm distributions. For probabilities of harm, this same principle from probability weighting---that small probabilities are inflated---implies that uniform lotteries will produce an unnecessarily {\em large} perceived total cost to the population. Hence, for harms, the optimal solution was to run a lottery over a much smaller subset of the population. This reflects an important contrast between uncertain harms and uncertain benefits as viewed through the lens of probability weighting: we may prefer to spread a small possibility of benefit very widely across the population, but would view the same distribution as producing unacceptable levels of total perceived risk when it is used for allocating harms. 

\xhdr{Systems with discrete tiers} Once the total benefit $r$ is sufficiently large (relative to the size of the population $n$), the maximum perceived distribution takes on a very different form: it provides benefit to a subset of the population with probability $1$, and runs a uniform lottery on the remainder. (Potentially there may also be one intermediate value as well.) This form of the solution, in which benefit is assigned in a small number of discrete ``tiers of service,'' has a more unusual structure; informally, the probability weighting model discovers that it can \hhdelete{provide significant gain}\hhedit{significantly improve the perceived benefit} to a subset of the population by pushing their probabilities up to $1$, while reducing the perceived benefit to the rest of the population only very slightly when it shifts their probabilities correspondingly downward to balance the total. 

Despite this unusual structure, we can find a number of cases in practice where probabilistic allocations of benefit use structures based on a small number of discrete tiers. As above, we do not claim that this arises directly from explicit probability-weighting considerations, but that a recurring preference for such solutions suggests that this type of approach---which is difficult to motivate via other simple probabilistic models---reflects a consistency with the underlying principles of probability weighting. 

One paradigmatic example of this multi-tiered approach is in the allocation of hunting permits. For safety and conservation reasons, states often impose limits on how many of a given animal species may be hunted in a season. There are often more people who want to hunt a particular type of animal than there are animals that the state deems appropriate to be hunted---requiring states to determine how to allocate this benefit among them. States approach this problem by running ``tag lotteries'' in which each tag grants the right to harvest one animal. States differ in how they structure these lotteries. Some conduct uniform lotteries in which each entrant has equal odds of obtaining a tag, while others structure them to reward longtime entrants who have not successfully drawn tags in previous rounds. They may do so by, for example, giving hunters an additional entry for each year they have been unsuccessful, increasing their chances---though not ensuring with certainty---that they receive the benefit in the current round. Still others let hunters accumulate ``preference points'' for each unsuccessful year, and then \emph{guarantee} tags to all hunters who meet some point threshold, while allocating any remaining tags by random lottery \citep{huntsplus}---effectively creating a two-tiered structure of benefit allocation, like that predicted by our model (Theorem~\ref{thm:max_benefit}). 

When states design these allocation mechanisms, they find themselves wrestling explicitly with how to distribute uncertain benefits fairly and how to balance between certainty and randomness. For example, in the late 2000s, Montana---which had used a preference point system to reward more senior hunters, to the degree that the process had become essentially a seniority queue in which hunters could wait decades for some prized tags---reformed its system to reserve a small number of tags for random lottery, in order to incentivize new, young hunters to participate \citep{montana}. Through this two-tier system, Montana gave many longtime hunters the certainty of knowing they'd be able to hunt; but by injecting a modest amount of uncertainty into the system via a second random tier, the state aimed to encourage others to behave based on a belief that they, too, might partake of the benefit.


Allocation systems with a small number of tiers, each with different probabilities, arise in other cases where permission to enter a restricted activity is being granted; for example, entries in the New York City marathon follow a similar high-level structure, with a portion of the entries allocated based on deterministic criteria and the rest allocated by lottery \citep{marathon}. Upgrade policies in brand loyalty programs can be viewed as setting different probabilities of receiving benefits for a small number of different tiers as well, though of course the specifics of each policy can become complex.
\hhcomment{Quick comment on ``small number of tiers'': our model only justifies two tiers, and with very particular structure to those two tiers. So its connection to settings with small number of tiers is not immediately clear.}
An intriguing feature of all these examples is that we typically think of the use of tiers as a way of controlling the cognitive complexity of a policy: it is easier for people to understand a small number of categories than a continuum of different probability values. In contrast, the model based on probability weighting has no intrinsic reason to group the allocation probabilities into a small number of tiers, since any distribution is an allowable option for it; the fact that it nevertheless creates small numbers of discrete tiers for its optimal solutions suggests a fundamental connection between this type of tiered discretization and the structure of preferences based on probability weighting.

\section{Conclusion and Future Directions}\label{sec:conclusion}

\newcommand{\omt}[1]{}

We have considered policies that distribute probabilities of harm,
or probabilities of benefit, across a population, and have asked
how we might distinguish among policies that produce different distributions
with the same total expected impact.
The theory of probability weighting from behavioral economics provides
one natural proposal: since people systematically perceive probabilities 
to be different from their actual values, 
two distributions with the same expectation will not in general
have the same sum of {\em weighted} probabilities.
Accordingly, we have investigated which types of distributions
optimize certain functions of these perceived (weighted) probabilities,
and have found that distributions with the characteristics of these
optima show up in a diverse range of policy contexts.
Our point in this analysis is not to recommend such distributions
as being normatively preferable to others, but instead to argue
that a societal preference for them is consistent with (and hence captured well by) basic principles from the behavioral science of
probability perception.

Our results also reveal a number of opportunities for further research, both empirical and theoretical. 

As we've emphasized throughout, we do not expect that our model fully accounts for the diverse considerations that inform decisions regarding the allocation of probabilistic harms and benefits. By design, our model highlights one aspect of a complex process that might help account for the preferences and policies that we seem to observe in practice. But far more empirical work is necessary to establish what role probability weighting actually plays in any given policymaking process. Our model points to particular types of allocation policies that are worthy of further investigation. Such work could help uncover empirical details that support alternative plausible explanations, but it could also find that policymakers struggle to offer a coherent justification for their chosen allocation, perhaps lending support to the idea that their judgments might rest, implicitly, on distorted perceptions of probabilities. Empirical methods from behavioral economics would no doubt be useful in this work as well, testing how various stakeholders involved in the policymaking process actually perceive the relevant probabilities of harm and benefit.

We also observe that the types of distributions favored by models based on probability weighting can be seen to align with principles of distributive justice in some cases (particularly when probabilities are distributed more uniformly), and to contrast with these principles in other cases (such as real-world instances where policies choose to concentrate risk on smaller sets, for example). It would be interesting to search for deeper foundations that might underpin these similarities and contrasts.

Since our analysis is stylized, we should be able to bring similar thinking to bear in other domains and extend our reasoning beyond the types of cases we've considered here. For the sake of concreteness, we'll discuss these extensions in terms of harm rather than benefit, but the questions generally apply in both cases.

To start, we are modeling cases in which there is a single kind of harm (engaging in a dangerous activity or not, being exposed to a hazardous environment or not) and the risk is experienced once. It would clearly be of interest to incorporate multiple levels of harm or harms that evolve over time. Moving to this non-binary setting would require that we incorporate behavioral models that account for ranked levels of utility into probability weighting (e.g., by distinguishing between relative losses and gains), as well as behavioral biases around costs incurred in the present versus the future. All of these might shed further light on settings where these types of policy questions arise.

Probability weighting is also not the only systematic behavioral bias that people exhibit when dealing with probabilities. For example, the phenomenon of {\em base rate neglect} leads people to overweight the evidence of a single instance rather than correctly taking into account the background distribution that the instance comes from. People also systematically display {\em overconfidence} when estimating the probability of beneficial outcomes that are based on their own agency \citep{della-vigna-psych-econ}. It would be interesting to see how these might be integrated into a framework for analyzing distributional questions as we do here. \klcomment{citation or two here?} \sbcomment{Do you happen to have some go-to citations, Hoda? If not, I can try to dig some up!}\jkcomment{I added a citation to a survey paper.}

Of course, none of these behavioral effects tell us how to choose among policies offering different distributions, but they can provide insight, in some cases, into why people express the policy preferences that they do. \hl{By integrating consideration of a robust behavioral bias like probability weighting into these contexts, we may better understand policy preferences for different mechanisms with uncertain allocations.}

\omt{
\begin{itemize}

    \item Settings in which the total harm $r$ is a function of $k$ 
    \item Multiple levels of harm/benefit (instead of the simplified version where $B = \{0,1\}$). This would necessitate taking loss aversion into account, and define probability weighting in a rank-dependent manner.
    \item One-off vs. repeated allocations of risk
    \item Other behavioral biases involving probabilities (e.g., probability neglect, probability matching?)
    \item Taking advantage of uncertainty in order to influence people to behave in certain ways
    \item  Behavioral economics as a field is devoted to uncovering systematic cognitive biases that explain apparent departures from rational thinking. From this perspective, it’s easy to see probability weighting as a discovery that allows us to explain preferences that don’t, at first, seem like they are the result of a rational cost-benefit analysis. But once we recognize the distorting effects of probability weighting, then it’s possible to understand these allocation preferences as the result of a rational cost-benefit analysis that just happens to be based on distorted expected costs and benefits. In other words, it lets us salvage cost-benefit analysis as an explanatory tool. But there’s something else going on here: probability weighing also has the effect of attuning us to questions of distributive justice---precisely the questions that cost-benefit analysis often fails to consider. There’s something intuitively much less offensive about two people having 50 chance of suffering a harm than concentrating certain harm on one person---and this is likely because we don’t see the total cost as the \textit{same} in these two situations. In a way, probability weighting performs a similar function as the “distributional weights” that philosophers and economists have proposed as a way to integrate distributional concerns into cost-benefit analysis. Probability weighting makes it rational to favor distributions of harms and benefits that are not highly unequal because doing so seems to reduce the overall harm or increase the overall benefit.
\end{itemize}

}



\section*{Acknowledgments}
We are grateful to Ted O'Donoghue,  Sendhil Mullainathan,  the participants of the AI,  Policy,  and Practice (AIPP) group at Cornell,  and the FEAT reading group at Carnegie Mellon for valuable discussions.  This material is based upon work supported in part by NSF IIS2040929,
a Simons Investigator Award,
a Vannevar Bush Faculty Fellowship,
a MURI grant,  AFOSR grant FA9550-19-1-0183,  and
grants from the ARO and the John D.  and Catherine T.  MacArthur Foundation.
Any opinions, findings and conclusions or recommendations expressed in this material are those of the authors and do not necessarily reflect the views of the National Science Foundation and other funding agencies.

{\small
\bibliographystyle{plainnat}
\bibliography{biblio_PW}
}

\pagebreak
\section{Remaining Proofs from the Main Text}\label{app:technical}


\subsection{Proof of Lemma~\ref{lem:min_concave}}
\begin{proof}
Suppose the statement does not hold, and there exist $i,j \in \{1, \cdots, m\}$ with $i \neq j$ and $x^*_i, x^*_j \in (0, \ell)$. Without loss of generality, we assume $0 <x^*_1 \leq x^*_2 < \ell$. Next, we construct $\vx'$, another feasible solution to (\ref{eq:min_concave}), from $\vx^*$ and show that $\vx'$ leads to a lower objective value compared to $\vx^*$, that is, $\sum_{i=1}^m f(x'_i) < \sum_{i=1}^m f(x^*_i)$. This observation would be a contradiction with the optimality of $\vx^*$.

We construct $\vx'$ as follows: If $x^*_1 + x^*_2 \leq \ell$, replace $x^*_1$ and $x^*_2$ in $\vx'$ with $0$ and $(x^*_1 + x^*_2)$, respectively. Otherwise and if $x^*_1 + x^*_2 > \ell$, replace them with $(x^*_1 + x^*_2) - \ell$ and $\ell$. It is trivial to verify that in both cases, $\vx'$ remains a feasible solution to (\ref{eq:min_concave}).
Next we prove that $\vx'$ improves the objective function. (We provide the proof for the case in which $x^*_1 + x^*_2 \leq \ell$, but the proof is identical for the case of $x^*_1 + x^*_2 > \ell$) 

First, note that $\sum_{i=1}^m f(x'_i) = \sum_{i=1}^m f(x^*_i) - f(x^*_1) - f(x^*_2) + f(x^*_1 + x^*_2) + f(0) $, so it suffices to show that $- f(x^*_1) - f(x^*_2) + f(x^*_1 + x^*_2) + f(0) < 0$ or equivalently, $f(x^*_1 + x^*_2) + f(0) < f(x^*_1) + f(x^*_2)$. To prove this, note that we can write $f(x^*_1) + f(x^*_2)$ as follows:
{\scriptsize
\begin{eqnarray*}
 &= & f\left( (x^*_1 + x^*_2)\frac{x^*_1}{x^*_1 + x^*_2} \right) + f\left( (x^*_1 + x^*_2)\frac{x^*_2}{x^*_1 + x^*_2} \right) \\
  &= & f\left( (x^*_1 + x^*_2)\frac{x^*_1}{x^*_1 + x^*_2} + 0\frac{x^*_2}{x^*_1 + x^*_2} \right) + f\left( 0\frac{x^*_1}{x^*_1 + x^*_2} + (x^*_1 + x^*_2)\frac{x^*_2}{x^*_1 + x^*_2} \right)
\end{eqnarray*}
}
Applying the definition of concavity to the two terms in the equation above, we obtain that:
{\scriptsize
\begin{eqnarray*}
&>& \frac{x^*_1}{x^*_1 + x^*_2}  f( x^*_1 + x^*_2) + \frac{x^*_2}{x^*_1 + x^*_2} f(0) +  \frac{x^*_1}{x^*_1 + x^*_2}f(0) + \frac{x^*_2}{x^*_1 + x^*_2} f( x^*_1 + x^*_2) \\
& = & f( x^*_1 + x^*_2) + f(0)
\end{eqnarray*}
}
or equivalently, $f(x^*_1) + f(x^*_2) > f (x^*_1 + x^*_2) + f(0).$ Therefore, we have that $\sum_{i=1}^m f(x^*_i) > \sum_{i=1}^m f(x'_i)$.
\end{proof}

\subsection{Proof of Lemma~\ref{lem:min_convex}}
\begin{proof}
Let $\vx^* = (\vx^*_1,\cdots,\vx^*_m)$ specify the optimal solution to (\ref{eq:min_convex}).
Suppose the statement does not hold, and $\vx^*$ contains two unequal numbers. Without loss of generality, suppose $x^*_1 \neq x^*_2$ with $\ell \leq x^*_1 < x^*_2 \leq 1$. Let $\vx'$ be another potential solution obtained from $\vx^*$ by replacing $x^*_1$ and $x^*_2$ in $\vx^*$ with $\frac{x^*_1 + x^*_2}{2}$ in $\vx'$. Next, we show that $\vx'$ is a feasible solution to (\ref{eq:min_convex}) and $\sum_{i=1}^m f(x'_i) < \sum_{i=1}^m f(x^*_i)$. This observation would be a contradiction with the optimality of $\vx^*$.

To show that $\vx'$ is feasible, note that since $x^*_1, x^*_2 \in [\ell,1]$, their average is also in $[\ell,1]$. Also, note that $x'_1 + x'_2  = \frac{x^*_1 + x^*_2}{2} + \frac{x^*_1 + x^*_2}{2} = x^*_1 + x^*_2$, so $\sum_{i=1}^m x'_i = \sum_{i=1}^m x^*_i = c$. 
To prove that $\vx'$ improves the objective function in (\ref{eq:min_convex}), note that $f$ is strictly convex, so using the definition of convexity, we know that:
$$\frac{1}{2} f(x^*_1) + \frac{1}{2}f(x^*_2)  > f \left(\frac{1}{2} x^*_1 + \frac{1}{2}x^*_2 \right),$$
or equivalently, $f(x^*_1) + f(x^*_2) > 2 f \left(\frac{x^*_1 + x^*_2}{2} \right) = f(x'_1) + f(x'_2).$ Therefore, we have that $\sum_{i=1}^m f(x^*_i) > \sum_{i=1}^m f(x'_i)$.
\end{proof}

\subsection{Proof of Proposition~\ref{prop:min_heter}}
\begin{proof}
We first show that in the optimal solution to (\ref{opt:welfare_heter}), at most one individual receives $0<p^*_i <\ell$. Suppose not and there exist two individuals $1,2$ with $0<p^*_1 \leq p^*_2<\ell$. It must be the case that $t_1 > t_2$, otherwise we could reduce the objective value by swapping $p^*_1$ and $p^*_2$. Next we show that we can improve the objective value by redistributing the probability of harm $(p^*_1 + p^*_2)$ among individuals 1,2 as follows: we reduce $p_1$ and increase $p_2$--keeping $p_1 + p_2$ constant at $(p^*_1 + p^*_2)$--until either $p_1$ reaches $0$ or $p_2$ reaches $\ell$. Note that this operation strictly improves the objective value because: (1) according to Lemma~\ref{lem:min_concave} the reduction in $w(p_1)$ is larger than the increase in $w(p_j)$; (b) the reduction in $w(p_1)$ is multiplied by $t_1$, while the increase in $w(p_2)$ is multiplied by $t_2$. This local improvement contradicts our initial assumption that in an optimal allocation, there can exist two individuals with $0<p^*_1 \leq p^*_2 < \ell$. 

Second, we show that for all $j \in S$ with $p^*_j \geq \ell$, $t_j \times w'(p^*_j) = c$ for a constant $c$. Suppose not, and there are two individuals (say 1,2) with assigned probabilities in the $[\ell,1]$ interval such that $t_1 w'(p^*_1) \neq t_2 w'(p^*_2)$. Without loss of generality, we assume $t_1 w'(p^*_1) < t_2 w'(p^*_2)$. Now note that we can improve the objective value by redistributing the probability of harm $(p^*_1 + p^*_2)$ among individuals 1,2 by solving the following \emph{convex} optimization problem:
$$\min_{\ell < p_1,p_2 < 1} t_1 w(p_1) + t_2 w(p_2) \text{ s.t. } p_1 + p_2 = p^*_1 + p^*_2.$$
We can equivalently write the above as
$$\min_{p_1,p_2 \in \reals} t_1 \tilde{w}(p_1) + t_2 \tilde{w}(p_2) \text{ s.t. } p_1 + p_2 = p^*_1 + p^*_2.$$
where $\tilde{w}(.)$ denotes the extended-value extension of $w: [\ell,1] \longrightarrow [0,1]$. 
Writing the stationarity conditions for the above equivalent problem, we obtain:
\begin{eqnarray*}
t_1 \tilde{w}'(p_1) = t_2 \tilde{w}'(p_2) = c \\
&\Rightarrow& t_1 w'(p_1) = t_2 w'(p_2) = c
\end{eqnarray*}
which is a contradiction with our initial assumption.
\end{proof}

\subsection{Proof of Lemma~\ref{lem:max_concave}}
\begin{proof}
The proof is identical to Lemma~\ref{lem:min_convex}. The only difference is that the optimization operator is maximization (instead of minimization), and concavity changes the direction of the inequalities in the proof.
\end{proof}

\subsection{Proof of Lemma~\ref{lem:max_convex}}
\begin{proof}
The proof is identical to Lemma~\ref{lem:min_concave}. The only difference is that the optimization operator is maximization (instead of minimization), and concavity changes the direction of the inequalities in the proof.
\end{proof}

\subsection{Proof of Theorem~\ref{thm:no_epsilon}}
\begin{proof}
Note that since $w$ is concave up to the inflection point $\frac{1}{e}$, its derivative is positive and decreasing in $p$. Also we know that the derivative amount to $+\infty$ at $p=0$, so there exists a value $0 < q < \frac{1}{e}$ at which $w'(q) = 1$.
We can choose $n$ large enough such that $\frac{1}{n-1} < q$. We next show that for such values of $n$, no distribution of the form $\left(\frac{\epsilon}{n-1}, \cdots, \frac{\epsilon}{n-1}, 1-\epsilon \right)$ with $\epsilon < 1-\frac{1}{e}$ can be optimal.

The proof is by contradiction. Suppose not and $\left(\frac{\epsilon}{n-1}, \cdots, \frac{\epsilon}{n-1}, 1-\epsilon \right)$ is a maximizer of (\ref{opt:welfare}) for some $\epsilon < 1-\frac{1}{e}$. Next, we show that we can strictly improve the objective by moving from this distribution to $(\frac{1}{n-1}, \cdots, \frac{1}{n-1}, 0)$. This is a contradiction with the assumption that $\left(\frac{\epsilon}{n-1}, \cdots, \frac{\epsilon}{n-1}, 1-\epsilon \right)$ is a maximizer.
To establish the above claim, we can write:
{\scriptsize
\begin{eqnarray*}
\sigma\left(\frac{\epsilon}{n-1}, \cdots, \frac{\epsilon}{n-1}, 1-\epsilon \right) &=& w(1-\epsilon) +  (n-1) w\left(\frac{\epsilon}{n-1}\right) \\
&\leq & 1-\epsilon +  (n-1) w\left(\frac{\epsilon}{n-1}\right) \\
&\leq & (n-1) \left[ \frac{1-\epsilon}{n-1} +  w\left(\frac{\epsilon}{n-1}\right) \right] \\
&\leq & (n-1) \left[ w\left(\frac{1}{n-1}\right) - w\left(\frac{\epsilon}{n-1}\right) +  w\left(\frac{\epsilon}{n-1}\right) \right] \\
&\leq & (n-1) w\left(\frac{1}{n-1}\right)
\end{eqnarray*}
}
In the second line of the above derivation, we utilized the fact that $1-\epsilon > \frac{1}{e}$--the fixed point and point of inflection of $w(.)$. So $w(1-\epsilon) < 1-\epsilon$. 
In the third line, we utilized the fact that $w\left(\frac{1}{n-1}\right) - w\left(\frac{\epsilon}{n-1}\right) > \frac{1-\epsilon}{n-1}$. To see this, note that
{\scriptsize
\begin{equation*}
w\left(\frac{1}{n-1}\right) - w\left(\frac{\epsilon}{n-1}\right) =  \int_{\frac{\epsilon}{n-1}}^{\frac{1}{n-1}}  w'(p) dp 
> \int_{\frac{\epsilon}{n-1}}^{\frac{1}{n-1}}  1 dp 
= \frac{1-\epsilon}{n-1}.
\end{equation*}
}
\end{proof}

\subsection{Proof of Proposition~\ref{prop:certain_benefit}}
\begin{proof}
We first show that if $r \geq (n-1)\ell + 1$, then at least one individual receives the benefit with certainty.
Suppose not, and for all $i \in S$, $p^*_i < 1$. Combining this fact with theorem~\ref{thm:max_benefit}, we can deduce that for every $i \in S$, one of the following must be the case: (1) $0 \leq p^*_i \leq \ell$, or (2) $\ell< p^*_i<1$. We also know according to theorem~\ref{thm:max_benefit} for at most one individual $i$, $\ell< p^*_i<1$. The maximum total benefit in a distribution with these properties realizes when $(n-1)$ individuals belong to category 1 with $p = \ell$ and one individual belongs to category 2 with $p$ very close but smaller than 1. So the maximum total benefit in such distribution is less than $(n-1)\ell + 1$, which is a contradiction with $r \geq (n-1)\ell + 1$.

Second, we show that there exists a constant $q$ such that if $r \leq q \times n $, no one receives the benefit with certainty. 
Let $0 < q \leq \ell$ be the point at which $w'(.)=1$. We can show that for $r \leq q \times n $, no one will receive the benefit with certainty in the optimal solution $\vp^*$. Suppose not, and there exists $i \in S$ such that $p^*_i = 1$. Let $m \leq (n-1)$ specify the number of individuals with $p^* \leq q$. Also, let $S_m \subset S$ denote the set of those individuals for whom $p^* \leq q$. With a reasoning similar to the one presented in the proof of Theorem~\ref{thm:no_epsilon}, it is easy to observe that the objective value can be strictly improved by taking $1$ unit of benefit from $i$, and distributing it equally among individuals in $S_m \cup \{i\}$. This is contradiction with the initial assumption that there exists an individual $i$ with $p^*_i = 1$.
\end{proof}

\subsection{Proof of Proposition~\ref{prop:max_heter}}
\begin{proof}
Suppose not and there exists an $i$ with $p^*_i=0$. Pick another individual $j$ with $0< p^*_j \leq 1$. 
We have that $w'(p^*_i) = w'(0) = +\infty$ and $w'(p^*_j) < +\infty$, so we can reduce $p^*_j$ by an infinitesimal amount $\epsilon$ and increase $p^*_i$ at the same time, keeping $p^*_i + p^*_j$ constant. If $\epsilon$ is chosen to be small enough, then this operation will improve the perceived benefit, which is a contradiction.
\end{proof}

\end{document}